\def\coloneqq{\mathrel{\mathop:}=}
\title{\LARGE \bf
Distributed Output Regulation of Heterogeneous Uncertain Linear Agents
}
\author{Satoshi Kawamura$^{1}$, Kai Cai$^{1}$, and Masako Kishida$^{2}$
	\thanks{$^{1}$Satoshi Kawamura and Kai Cai are with Department of Electrical and Information Engineering, Osaka City University, Japan.
		{\tt\small Emails: kawamura@c.info.eng.osaka-cu.ac.jp, kai.cai@eng.osaka-cu.ac.jp}
	}
	\thanks{$^{2}$Masako Kishida is with Principles of Informatics Research Division, National Institute of Informatics, Japan.
		{\tt\small Email: kishida@nii.ac.jp}
	}
	\thanks{This work was supported by the open collaborative research program at National Institute of Informatics (NII) Japan (FY2018)
		and the Research and Development of Innovative Network Technologies to Create the Future of National Institute of Information and Communications Technology (NICT) of Japan.}
}
\newtheorem{definition}{Definition}
\newtheorem{theorem}{Theorem}
\newtheorem{lemma}{Lemma}
\newtheorem{problem}{Problem}
\newtheorem{assumption}{Assumption}
\newtheorem{remark}{Remark}
\begin{document}
	\maketitle
	\thispagestyle{empty}
	\pagestyle{empty}

	\begin{abstract}
	We study a multi-agent output regulation problem, where not all agents have access to the exosystem's dynamics.
	We propose a fully distributed controller that solves the problem for linear, heterogeneous, and uncertain agent dynamics as well as time-varying directed networks.
	The distributed controller consists of two parts: (1) an {\it exosystem generator} that locally estimates the exosystem dynamics,
	and (2) a {\it dynamic compensator} that, by locally approaching the internal model of the exosystem, achieves perfect output regulation.
	Moreover, we extend this distributed controller to solve an output synchronization problem where not all agents initially have the same internal model dynamics.
	Our approach leverages methods from internal model based controller synthesis and multi-agent consensus over time-varying directed networks; the derived result is a generalization of the (centralized) internal model principle to the distributed, networked setting.
\end{abstract}

	\section{INTRODUCTION}
\label{sec:intro}
	Over the past decade, many distributed control problems of networked multi-agent systems have been extensively studied; these include e.g. consensus, averaging,
	synchronization, coverage, and formation (e.g. \cite{RenBeard:08, BulCorMar:09, LinWanHanFu:14, FB-LNS, mesbahi2010graph}). Progressing beyond first/second-order
	and homogeneous agent dynamics, the {\it distributed output regulation problem} with general linear (time-invariant, finite-dimensional) and heterogeneous agent
	dynamics has received much recent attention (e.g. \cite{WanHonHuaJia:10,SuHua:TAC12,SuHonHua:13,LiuHuang:15,cai2015cooperative,YanDiscrete17}). In this problem, a network of agents each tries to match its
	output with a reference signal, under the constraint that only a few agents can directly measure the reference. The reference signal itself is typically generated by
	an external dynamic system, called ``exosystem''.  The distributed output regulation problem not only subsumes some earlier problems such as (leader-following) consensus and synchronization, but also addresses issues of disturbance rejection and robustness to parameter uncertainty.
	Also see e.g. \cite{SuHua:AUT13,8264161} for further extensions of this problem that deal with nonlinear agent dynamics.
	
	Output regulation has a well-studied centralized version: A single plant tries to match its output with a reference signal (while maintaining the plant's internal stability)
	\cite{Dav:76,FraWon:76,Fra:77}. In the absence of system parameter uncertainty, the solution of the ``regulator equations'', embedding a copy of the exosystem dynamics,
	provides a solution to output regulation \cite{Fra:77}.  When system parameters are subject to uncertainty, however, a dynamic compensator/controller must be used embedding $q$-copy of
	the exosystem, where $q$ is the number of (independent) output variables to be regulated. The latter is well-known as the {\it internal model principle} \cite{FraWon:76}.
	These methods for solving the centralized output regulation problem, however, cannot be applied directly to the distributed version, inasmuch as not all agents have direct access
	to the reference signal or the exosystem dynamics.

	The distributed output regulation of networks of heterogeneous linear agents is studied in \cite{SuHua:TAC12}. The proposed distributed controller consists of two parts:
	an exosystem generator and a controller based on regulator equation solutions. Specifically, the exosystem generator of each agent aims to (asymptotically) synchronize with
	the exosystem using consensus protocols, thereby creating a local estimate of the exosystem. Meanwhile each agent independently tracks the signal of its local generator, by applying
	standard centralized methods (in \cite{SuHua:TAC12} regulator equation solutions are applied). This approach effectively separates the controller synthesis into two parts -- distributed exosystem
	generators by network consensus and local output regulation by regulator equation solution.  

	One important limitation, however, of the above approach is: in both the exosystem generator design and the regulator equation solution, it is assumed that each agent uses
	exactly the same dynamic model as that of the exosystem. This assumption may be unreasonable in the distributed network setting, because those agents that cannot directly measure the reference
	signal are unlikely to know the {\it precise} dynamic model of the exosystem. To deal with this challenge, \cite{cai2015cooperative} proposes (in the case of static networks)
	an ``adpative'' exosystem generator and an adaptive solution to the regulator equations. In essence, each agent runs an additional consensus algorithm to update their
	``local estimates'' of the exosystem dynamics. 

	All the regulator-equation based solutions above fall short in addressing the issue of system parameter uncertainty. In practice one may not have precise knowledge of some
	entries of the system matrices, or the values of some parameters may drift over time. The distributed output regulation problem considering parameter uncertainty is studied in 
	\cite{WanHonHuaJia:10,SuHonHua:13}. The proposed controller is based on the internal model principle, but does not employ the two-part structure mentioned above. It appears to
	be for this reason that restrictive conditions (acyclic graph or homogeneous nominal agent dynamics) have to be imposed in order to ensure solving output regulation.
	Moreover, it is also assumed in \cite{WanHonHuaJia:10,SuHonHua:13} that each agent knows the exact model of the exosystem dynamics.  
	 
	In this paper, we provide a new solution to the distributed output regulation problem of heterogeneous linear agents, where the agents do not have an accurate dynamic model of the exosystem and the agent dynamics are subject to parameter uncertainty.
	In this setting, to our best knowledge, no solution exists in the literature.
	In particular, we propose to use the two-part structure of the distributed controller in the following manner: The first part is an  exosystem generator 
	that works over time-varying networks (\cite{LiuHuang:16,Cai:arxiv16})\footnote{This was apparently developed in \cite{LiuHuang:16} and in \cite{Cai:arxiv16} independently. The first versions of \cite{LiuHuang:16} and \cite{Cai:arxiv16} appeared on arXiv.org, with the former three months earlier than the latter. We thank Dr. Liu and Dr. Huang for in a correspondance bringing our attention to their work.}, and the second part is a dynamic compensator embedding an internal model of the exosystem that addresses parameter uncertainty.
	The challenge here is, in the design of the dynamic compensator, those agents that cannot directly measure the exosystem have {\it no} knowledge of the internal model of the exosystem;
	on the other hand, we know from \cite{FraWon:76} that a precise internal model is necessary to achieve perfect regulation with uncertain parameters.
	To deal with this challenge, we propose a novel consensus-based local internal model for each agent to estimate the internal model of the exosystem. For this time-varying local internal model, we moreover design novel strategies for its eigenvalues to avoid certain transmission zeros of the agents' dynamics in order to guarantee the existence of a dynamic compensator for all time.
	In addition, we extend our new solution to solve a related problem of output synchronization \cite{scardovi2009synchronization, wieland2009internal, wieland2011internal, lunze2012synchronization, grip2015synchronization, seyboth2015robust, Zuo2017}. In this problem there is no exosystem; yet the outputs of all agents are required to converge to the same (dynamic) values.

	The contributions of this paper are threefold. First, the proposed internal-model based distributed controller is the first solution to the multi-agent output regulation problem where the agents do not know {\it a priori} the internal model ($q$-copy) of the exosystem and the agent dynamics are uncertain. Concretely, the proposed distributed controller provably solves the multi-agent output regulation problem in which the following constraints/conditions simultaneously hold: (a) Unknown dynamic model of the exosystem. This is not considered in \cite{WanHonHuaJia:10,SuHonHua:13}. (b) Parameter uncertainty of agent dynamics. This is not addressed in \cite{SuHua:TAC12,cai2015cooperative}. (c) Non-minimum phase agent dynamics. This is not dealt with in \cite{LiuHuang:15}. (d) Time-varying directed networks. This is not addressed in \cite{cai2015cooperative,YanDiscrete17,SuHua:AUT13,8264161}. (e) Heterogeneous agent dynamics. This is not dealt with in \cite{YanDiscrete17}. Second, our solution to the output synchronization problem improves the literature \cite{scardovi2009synchronization, wieland2009internal, wieland2011internal, lunze2012synchronization, grip2015synchronization, seyboth2015robust, Zuo2017} by providing capability of dealing with uncertain agent dynamics, and not requiring all agents initially to have the same internal model dynamics. These improvements allow easier implementation of the proposed controller in a distributed setting.
	As a third contribution, the core of our solution is the time-varying local internal model ($q$-copy), updated in the network setting, which is in itself new in the literature of the internal model principle (cf. \cite{FraWon:76,Fra:77,knobloch2012topics,huang2004nonlinear}) and generalizes the (static, centralized) internal model to the dynamic, distributed one.
	
	In addition we note that \cite{LiuHuang:16} proposes a distributed controller to solve the consensus problem whose design idea is similar to ours.  We point out, however, a few important differences. First, the consensus problem is different from the output regulation problem (the former is usually viewed as a special case of the latter with full-state observation). Second, while  \cite{LiuHuang:16} deals with a class of nonlinear systems, the eigenvalues of the exosystem are required to be distinct. We do not make such an assumption; thus (i) the set of signals that can be generated by the exosystem is a strict superset of that in \cite{LiuHuang:16}, and (ii) the minimal polynomial of the exosystem is generally different from the characteristic polynomial. Third, our designed distributed controller is based on the internal model principle, which is different from the controller designed in \cite{LiuHuang:16}. Finally, while the parameter uncertainty considered in \cite{LiuHuang:16} is represented by a vector, the uncertainty  in this paper is represented by matrices.
	
	The rest of the paper is organized as follows.
	Section~\ref{sec:prelim} introduces the concept of communication graphs and formulates the robust output regulation problem.
	Section~\ref{sec:structuredist} presents the solution distributed controller, which consists of two parts -- a distributed exosystem generator and a distributed dynamic compensator.
	Section~\ref{sec:main} states our main result and provides its proof.
	In Section~\ref{sec:moregeneral} we design the more general distributed controller which addresses non-minimum phase agent dynamics with purely imaginary transmission zeros.
	Section~\ref{sec:synchro} extends our proposed controller to solve an output synchronization problem.
	Section~\ref{sec:example} illustrates our result by simulation examples. Finally, Section~\ref{sec:conclusion} states our conclusions.
	\footnote{The conference version of this paper has been submitted to ACC'19. This paper improves the conference version in the following aspects. (i) A new problem of output synchronization is studied and solved by extending our controller design (Section~\ref{sec:main}). (ii) Elaborated simulations are provided including output regulation over a large-scale network and an output synchronization example. (iii) Detailed analyses and proofs are provided.}

	\section{PRELIMINARIES}
\label{sec:prelim}
	In this paper, we will use the following notation.
	Let ${\bf 1}_n\coloneqq[1\cdots1]^\top\in\mathbb{R}^n$,
	and $I_n$ be the $n \times n$ identity matrix.
	For a complex number $c\in\mathbb{C}$, denote its complex conjugate by $c^*$.
	Write $\mathbb{C}_+$ for the closed right half (complex) plane;
	$\sigma(A)$ for the set of all eigenvalues of $A$.
	We say that a (square) matrix is stable if the real parts of all its eigenvalues are negative.	

\subsection{Agents and Exosystem}
	We consider a network of $N$ agents that are linear, time-invariant, and finite-dimensional. The dynamics of each agent $i(=1,\dots,N)$ is given by
	\begin{align}
		&\dot{x}_i=A_ix_i+B_iu_i+P_iw_0 \label{eq:statetransision}\\
		&z_i=C_ix_i+D_iu_i+Q_iw_0\label{eq:outputtransision}
	\end{align}
	where $x_i\in\mathbb{R}^{n_i}$ is the state vector, $u_i\in\mathbb{R}^{m_i}$ the control input, $z_i\in\mathbb{R}^{q_i}$ the output to be regulated, and $w_0\in\mathbb{R}^r$
	the {\it exogeneous signal} generated by the {\it exosystem}
	\begin{align}
		\dot{w}_0=S_0w_0.
		\label{eq:exosystem}
	\end{align}
	Here $A_i, B_i, C_i, D_i, P_i, Q_i$ and $S_0$ are real matrices of appropriate sizes.
	The signal $w_0$ represents reference to be tracked and/or disturbance to be rejected: $P_iw_0$ in (\ref{eq:statetransision})
	represents disturbance acting on the agent $i$'s dynamics and $Q_iw_0$ in (\ref{eq:outputtransision}) represents reference signals to be tracked by agent $i$.
	
	\begin{assumption}
		\label{ass:apriori}
		The exosystem's $w_0$ and $S_0$ are not (initially) known by the $N$ agents.
	\end{assumption}
	
	Note that the agents are generally heterogeneous: Each of the matrices $A_i, B_i, C_i, D_i, P_i$ and $Q_i$ may have different dimensions and entries.
	Furthermore, we consider that the matricies may have uncertainty;
	namely
	\begin{align}
		&A_i = A_{i0}+\Delta A_i,\ B_i = B_{i0}+\Delta B_i,\ C_i = C_{i0}+\Delta C_i,\nonumber\\
		&D_i = D_{i0}+\Delta D_i,\ P_i = P_{i0}+\Delta P_i,\ Q_i = Q_{i0}+\Delta Q_i\label{eq:perturbation}
	\end{align}	
	where $A_{i0},B_{i0},C_{i0},D_{i0},P_{i0},Q_{i0}$ are the nominal parts of agent $i$ and $\Delta A_i,\Delta B_i,\Delta C_i,\Delta D_i,\Delta P_i,\Delta Q_i$ are the uncertain parts.
	These uncertainty parts may represent measurement errors in the actual determination of the physical parameters, or the reality that these parameters may change with time due to wear and aging \cite{knobloch2012topics}.
	
\subsection{Communication Digraphs}
	Given a multi-agent system with $N(\geq1)$ agents and an exosystem, we represent the time-varying interconnection among the agents and the exosystem by a digraph
	$\hat{\mathcal{G}}(t)=(\hat{\mathcal{V}},\hat{\mathcal{E}}(t))$, where $\hat{\mathcal{V}}=\mathcal{V}\cup\{0\}$, $\mathcal{V}=\{1,\dots,N\}$, is the node set, and $\hat{\mathcal{E}}(t)\subseteq\hat{\mathcal{V}}\times\hat{\mathcal{V}}$ is the edge set. 
	The node $i$, $i=1,\dots,N$, represents the $i$th agent, and the node 0 the exosystem.
	Moreover, $\hat{\mathcal{V}}$ is the node set including the exosystem and $\mathcal{V}$ is the node set except for the exosystem.
	The $i$th node receives information from the $j$th node at time $t$ if and only if $(j,i)\in\hat{\mathcal{E}}(t)$.
	We consider the digraph $\hat{\mathcal{G}}(t)$ does not contain selfloop edges, i.e. $(i,i) \notin \hat{\mathcal{E}}(t)$ for all $i \in \hat{\mathcal{V}}$.
	Only those nodes $i\in\mathcal{V}$ such that $(0,i)\in\hat{\mathcal{E}}(t)$ can receive information from the exosystem 0 (i.e. $w_0$, $S_0$) at time $t$.
	The {\it union digraph} for a time interval $[t_1,t_2]$ is defined as $\hat{\mathcal{G}}([t_1,t_2])\coloneqq(\hat{\mathcal{V}}, \cup_{t\in[t_1,t_2]}\hat{\mathcal{E}}(t))$.
	\begin{definition}
		The digraph $\hat{\mathcal{G}}(t)$ {\it uniformly contains a spanning tree} if there is a finite $T>0$ such that for every $t\geq0$ the union digraph $\hat{\mathcal{G}}([t,t+T])$
		contains a spanning tree.
	\end{definition}
	
	Further, we need the following notions.
	Consider a union digraph $\mathcal{G}([t_1, t_2])=(\mathcal{V}, \cup_{t\in[t_1,t_2]}\mathcal{E}(t))$ (excluding the exosystem). Let $\mathcal{V}_r\subseteq\mathcal{V}$ be a nonempty subset of $\mathcal{V}$. Then the digraph $\mathcal{G}_r([t_1,t_2])=(\mathcal{V}_r,\mathcal{E}_r([t_1,t_2]))$, where $\mathcal{E}_r([t_1,t_2])\coloneqq(\mathcal{V}_r\times \mathcal{V}_r)\cap(\cup_{t\in[t_1,t_2]}\mathcal{E}(t))$, is said to be the {\it induced subdigraph} of $\mathcal{G}([t_1, t_2])$ by $\mathcal{V}_r$.
	
	\begin{definition}
		A {\it strongly connected component} $\mathcal{G}_r([t_1,t_2]) = (\mathcal{V}_r, \mathcal{E}_r([t_1,t_2]))$ of a union digraph $\mathcal{G}([t_1,t_2]) = (\mathcal{V}, \cup_{t \in [t_1,t_2]} \mathcal{E}(t))$ is a maximal induced subdigraph of $\mathcal{G}([t_1,t_2])$ by $\mathcal{V}_r$ which is strongly connected. Moreover, $\mathcal{G}_r([t_1,t_2])$ is a {\it closed} strongly connected component if for every $i \in \mathcal{V}_r$ and every $j \in \mathcal{V} \setminus \mathcal{V}_r$, $(j,i) \notin \mathcal{E}_r([t_1,t_2])$.
	\end{definition}	
	\begin{definition}		
		Consider the time-varying digraph $\mathcal{G}(t)=(\mathcal{V},\mathcal{E}(t))$ and let $\mathcal{V}_r \subseteq \mathcal{V}$ be nonempty. We say that $\mathcal{G}(t)$ {\it uniformly contains a spanning tree with respect to $\mathcal{V}_r$} if there is a finite $T>0$ such that for every $t\geq0$ the union digraph $\mathcal{G}([t,t+T])$ contains a spanning tree and there is a unique closed strongly connected component (subdigraph) induced by $\mathcal{V}_r$.
	\end{definition}
	
	We define the communication weight $a_{ij}(t)$ by
	$a_{ij}(t)\geq\epsilon$ (where $\epsilon$ is a positive constant) if $(j,i)\in\hat{\mathcal{E}}(t)$, and $a_{ij}(t)=0$ if $(j,i)\notin\hat{\mathcal{E}}(t)$.	We assume that $a_{ij}(t)$ is piecewise continuous and bounded for all $t\geq0$ (a technical assumption to be used in Lemma~\ref{lem:converge} below).
	Note that the exosystem does not receive information from any agents, and thus $a_{0j}(t)=0$ for all $j\in\mathcal{V}$, $t\geq0$.
	
	For time $t\geq0$ and digraph $\hat{\mathcal{G}}(t)$, the graph Laplacian $L(t)=[l_{ij}(t)]\in\mathbb{R}^{(N+1)\times(N+1)}$ is defined as
	\begin{align*}
		l_{ij}(t)\coloneqq\left\{
			\begin{array}{ll}
				\sum_{j=0}^N a_{ij}(t),&i=j\\
				-a_{ij}(t),&i\neq j
			\end{array}\right.			
	\end{align*}
	where $i,j\in\{0,\dots,N\}$.	
	
\subsection{Problem Statement}
	We represent by $\hat{\mathcal{G}}(t)=(\hat{\mathcal{V}},\hat{\mathcal{E}}(t))$ the time-varying interconnection among the $N$ agents and the exosystem as in the preceding subsection. In particular, at any time only a subset of agents (possibly different across time) can receive information from the exosystem. This differs the current problem from the traditional,
	centralized output regulation problem \cite{FraWon:76,Fra:77,knobloch2012topics,huang2004nonlinear}. Even if an agent receives information from the exosystem at some time, the agent does not know whether the information is from the exosystem or another agent. Namely we consider that the agents do not have the numbering information including the exosystem (numbered 0).
	\begin{problem}[Distributed Output Regulation Problem]
	\label{prob:mainproblem}
		Given a network of agents (\ref{eq:statetransision}), (\ref{eq:outputtransision}), (\ref{eq:perturbation}) and an exosystem (\ref{eq:exosystem}) with interconnection represented by $\hat{\mathcal{G}}(t)$ and with Assumption~\ref{ass:apriori},
		design for each agent $i\in\mathcal{V}$ a distributed controller such that
		\begin{align*}
			\lim_{t\to\infty} z_i(t)=0
		\end{align*}
		for all $x_i(0), w_0(0)$.
	\end{problem}
	
In the next section we solve Problem~1 by designing an internal-model based distributed controller.	

\subsection{Motivating Example}
	\label{sec:CompEx}
	Reference \cite{cai2015cooperative} considers Problem~\ref{prob:mainproblem} but without the uncertainty part in (\ref{eq:perturbation}), and proposes an effective solution based on regulator equations (for time-invariant digraphs). However, this solution cannot deal with uncertain agent dynamics, as we shall illustrate by an example.
	
	Consider the time-invariant network as displayed in Fig.~\ref{fig:ex_graph_2015}. The exosystem (node~0) is
	\begin{align*}
		\dot{w}_0(t)=S_0w_0,\quad S_0=\left[\begin{array}{cc}0 & 2\\ -2 & 0\end{array}\right].	
	\end{align*}
	The agents $i(=1,2,3,4$) are
	\begin{align*}
		&\dot{x}_i=A_ix_i+B_iu_i+P_iw_0\\
		&z_i=C_ix_i+D_iu_i+Q_iw_0
	\end{align*}
	where
	\begin{align*}
		&A_i = A_{i0}+\Delta A_i,\ B_i = B_{i0}+\Delta B_i,\\
		&A_{i0}=\left[\begin{array}{ccc}0 & 1 & 0\\ 0 & 0 & 1 \\ i & 0 & 0\end{array}\right],\
		B_{i0}=\left[\begin{array}{c}1\\ 0\\ (0.5+0.1i)^2\end{array}\right],\\
		&\Delta A_i = \left[\begin{array}{ccc}0 & 0.5 & 0\\ 0 & 0 & 0 \\ -0.5 & 0 & 0\end{array}\right],\
		\Delta B_i = \left[\begin{array}{c}0.5\\ 0\\ 0\end{array}\right],\\
		&C_i=\left[1\ 0\ 0\right],\
		D_i=0,\
		P_i=\left[\begin{array}{cc}0 & 1\\ 0 & 0\\ 0 & 1\end{array}\right],\
		Q_i=\left[-1\ 0\right].
	\end{align*}
	The initial states $w_0(0),\ x_i(0)$ are selected uniformly at random from the interval $[-1,1]$.
	\begin{figure}[t]
		\begin{minipage}{0.27\hsize}
			\centering
				\includegraphics[scale=1.0]{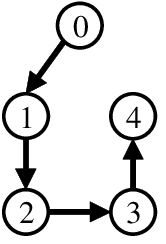}
			\caption{Network}
			\label{fig:ex_graph_2015}
		\end{minipage}
		\begin{minipage}{0.73\hsize}
			\centering
				\includegraphics[scale=1.0]{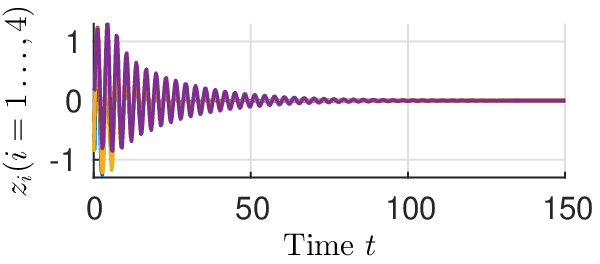}
		\caption{Output trajectories without uncertainty}
		\label{fig:ex_pre_2015_OK}
		\end{minipage}
	\end{figure}
	\begin{figure}[t]
		\centering
		\includegraphics[scale=1.0]{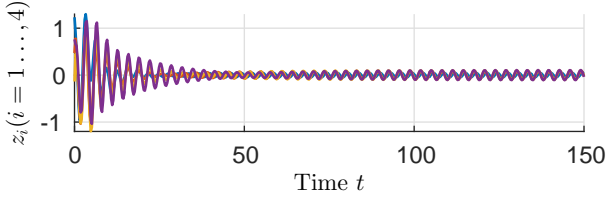}
		\caption{Output trajectories with uncertainty}
		\label{fig:ex_pre_2015}
	\end{figure}
	
	Fig.~\ref{fig:ex_pre_2015_OK} shows the simulation result without uncertainty parts $\Delta A_i$ and $\Delta B_i$ using the solution in \cite{cai2015cooperative} and Fig.~\ref{fig:ex_pre_2015} shows that with uncertainty.
	Observe in Fig.~\ref{fig:ex_pre_2015} that $z_i(i=i,\dots,4)$ do not converge to $0$ due to the uncertainty parts $\Delta A_i$ and $\Delta B_i$.
	Thus Problem~\ref{prob:mainproblem} cannot be solved by \cite{cai2015cooperative}, and we are motivated to propose a new solution that is not based on regulator equations but based on the internal model ($q_i$-copy).

	\section{STRUCTURE OF DISTRIBUTED CONTROLLER}
\label{sec:structuredist}
	At the outset we make the following (standard) assumptions.
	\begin{assumption}
		\label{ass:spanningtree}
		The digraph $\hat{\mathcal{G}}(t)$ uniformly contains a spanning tree and its root is node~0 (the exosystem).
	\end{assumption}
	
	\begin{assumption}
		\label{ass:stabilizable}
		For each agent $i\in\mathcal{V}$, $(A_{i0},B_{i0})$ is stabilizable.
	\end{assumption}
	
	\begin{assumption}
		\label{ass:detectable}
		For each agent $i\in\mathcal{V}$, $(C_{i0},A_{i0})$ is detectable.
	\end{assumption}
	
	\begin{assumption}
		\label{ass:transmissionzeros}
		For each agent $i\in\mathcal{V}$ and for every eigenvalues $\lambda$ of $S_0$,
			\begin{align}
			{\rm rank}\left[\begin{array}{cc}
				A_{i0}-\lambda I_{n_i} & B_{i0} \\
				C_{i0} & D_{i0}
			\end{array}\right]=n_i+q_i.\label{eq:transcondition}
			\end{align}
	\end{assumption}
	
	\begin{assumption}
		\label{ass:Seigen}
		The real parts of all eigenvalues of $S_0$ are zeros.
	\end{assumption}
	
	\begin{remark}		
		Assumption~\ref{ass:spanningtree} and Assumptions~\ref{ass:stabilizable}-\ref{ass:transmissionzeros} are necessary conditions
		for consensus over time-varying networks \cite{RenBeard:08} and for output regulation \cite{FraWon:76}, respectively. 
		Only Assumption~\ref{ass:Seigen} is a sufficient condition for (centralized) output regulation, but is commonly made for distributed output regulation (e.g. \cite{LiuHuang:15,cai2015cooperative}) such that the exogeneous signal does not diverge exponentially fast.
	\end{remark}
	\begin{remark}
		By \cite{Dav:76}, Assumption~\ref{ass:transmissionzeros} means that the {\it transmission zeros} of agent $i$ are disjoint from all eigenvalues of $S_0$, and
		also implies that the number of outputs is no more than that of inputs, i.e. $m_i\geq q_i$. A transmission zero $\zeta\in\mathbb{C}$ of agent $i$ is such that
		\begin{align*}
			{\rm rank}\left[\begin{array}{cc}
				A_{i0}-\zeta I_{n_i} & B_{i0} \\
				C_{i0} & D_{i0}
			\end{array}\right]<n_i+q_i.
		\end{align*}
	\end{remark}

	Because not all agents can access the exosystem (i.e. $w_0$ cannot be measured by all agents), we cannot use
	(\ref{eq:outputtransision}) directly. Instead we consider the following (estimated) error vector
	\begin{align}
		e_i=C_ix_i+D_iu_i+Q_iw_i\in\mathbb{R}^{q_i}\label{eq:errortransision}
	\end{align}
	where $w_i\in\mathbb{R}^r$ is the {\it estimated} exogeneous signal. This $e_i$ is $z_i$ in (\ref{eq:outputtransision}) with $w_0$ replaced by $w_i$.
	
	In order to solve Problem~\ref{prob:mainproblem}, we present a controller that consists of two parts: (1) distributed {\it exosystem generator} and
	(2) distributed {\it dynamic compensator}.

\subsection{Distributed exosystem generator}\label{subsec:deg}
	It is reasonable for each agent $i\in\mathcal{V}$ to have a local estimate of the exosystem's dynamics since not all agents can access the exosystem.
	Let $S_i(t)\in\mathbb{R}^{r\times r}$ be the estimete of $S_0$ and consider
	\begin{align}
		&\dot{S}_i(t)=\sum_{j=0}^N a_{ij}(t)\left(S_j(t)-S_i(t)\right),\label{eq:Strans}\\
		&\dot{w}_i(t)=S_i(t)w_i(t)+\sum_{j=0}^N a_{ij}(t)\left(w_j(t)-w_i(t)\right).\label{eq:exoest}
	\end{align}
	By using (\ref{eq:Strans}) and (\ref{eq:exoest}), it is guaranteed under Assumption~\ref{ass:spanningtree} that
	\begin{align*}
		\lim_{t\to\infty} (S_i(t)-S_0)=0,\quad\lim_{t\to\infty} (w_i(t)-w_0(t))=0
	\end{align*}
	for all $S_i(0)$, $w_i(0)$ and all $i \in \mathcal{V}$.
	We show this statement in detail in Section \ref{sec:main} below.
	
	This protocol is used to approximate the exosystem for each agent $i\in\mathcal{V}$.
	Thus we call (\ref{eq:Strans}) and (\ref{eq:exoest}) the {\it exosystem generator}.
	
	Equations (\ref{eq:Strans}) and (\ref{eq:exoest}) have also been used in \cite{LiuHuang:16} for the adaptive distributed observer (see Footnote 1 above), and first proposed in \cite{cai2015cooperative} but for time-invariant networks.
	
\subsection{Distributed dynamic compensator}
	We consider the following {\it dynamic compensator}
	\begin{align}
		&\dot{\xi}_i=E_i(t)\xi_i+F_i(t)e_i\nonumber\\
		&u_i=K_i(t)\xi_i\label{eq:compensator}
	\end{align}
	where $\xi_i$ is the state of the dynamic compensator and $e_i$ is defined in (\ref{eq:errortransision}).
	
	In order to specify the matrices $E_i(t),F_i(t),K_i(t)$ in (\ref{eq:compensator}), we extend the internal model control design in \cite[Section~1.3]{huang2004nonlinear} to the multi-agent system setting.
	Let $\lambda_{0,1},\dots,\lambda_{0,k}, k\leq r$ be the roots of the minimal polynomial of $S_0$. Note that $\{\lambda_{0,1},\dots,\lambda_{0,k}\}\subseteq\sigma(S_0)$.
	Then we define $\lambda_0\coloneqq[\lambda_{0,1}\cdots\lambda_{0,k}]^\top$.
	Let $c_{0,d}(\lambda_0),d=1,\dots,k$ be the coefficients of the polynomial satisfying
	\begin{align}
		s^k+c_{0,1}(\lambda_0)s^{k-1}+\cdots+c_{0,k-1}&(\lambda_0)s+c_{0,k}(\lambda_0)\nonumber\\
		&=\prod_{d=1}^k (s-\lambda_{0,d}(t)).\label{eq:minpoly}
	\end{align}
	
	For each agent $i\in\mathcal{V}$, let $\lambda_i(t)\coloneqq[\lambda_{i,1}(t)\cdots\lambda_{i,k}(t)]^\top$ be a local estimate of $\lambda_0$, and $c_{i,d}(\lambda_i),d=1,\dots,k$, 		
	the estimated coefficients generated by $\lambda_i(t)$ that satisfy
	\begin{align}
		s^k+c_{i,1}(\lambda_i)s^{k-1}+\cdots+c_{i,k-1}&(\lambda_i)s+c_{i,k}(\lambda_i)\nonumber\\
		&=\prod_{d=1}^k (s-\lambda_{i,d}(t)).\label{eq:minpoly_Si}
	\end{align}

	Consider the following consensus algorithm:
	\begin{align}
		\dot{\lambda}_i(t)=\sum_{j=0}^N a_{ij}(t)\left(\lambda_j(t)-\lambda_i(t)\right),\ \lambda_i(0)\in{\rm j}\mathbb{R}^k.
		\label{eq:lambdatrans}
	\end{align}
	It follows from Assumption~\ref{ass:spanningtree} that $\lambda_i(t)\rightarrow\lambda_0$ as $t\rightarrow\infty$. As a result, the coefficient $c_{i,d}(\lambda_i)\rightarrow c_{0,d}(\lambda_0)$ as $t\rightarrow\infty$
	for each $d=1,\dots,k$. Note that by Assumption~\ref{ass:Seigen} the entries of $\lambda_0$ are purely imaginary, and hence we only need to consider the initial condition
	$\lambda_i(0)\in{\rm j}\mathbb{R}^k$ (thus $\lambda_i(t)\in{\rm j}\mathbb{R}^k$ for all $t\geq0$).
	
	Since we consider that the agents' dynamics have uncertainty, the regulator equation approach (e.g. \cite{cai2015cooperative}) does not work.
	Thus for the robust output regulation problem, we consider the $q_i$-copy {\it internal model} as \cite[Section~1.3]{huang2004nonlinear}.
	In the case where an agent has multiple outputs, we need to assign the internal model to each output.
	Let $G_i(\lambda_i)\coloneqq I_{q_i}\otimes G'_i(\lambda_i),\ H_i\coloneqq I_{q_i}\otimes H'_i$ be the $q_i$-copy internal model ($\otimes$ denotes Kronecker product), where
	\begin{align}
		&G'_i(\lambda_i)\coloneqq\left[\begin{array}{cccc}
				0&1&\cdots&0 \\
				0&0&\cdots&0 \\
				\vdots&\vdots&\ddots&\vdots \\
				0&0&\cdots&1 \\
				-c_{i,k}(\lambda_i)&-c_{i,k-1}(\lambda_i)&\cdots&-c_{i,1}(\lambda_i)
			\end{array}\right]\nonumber\\
		&H'_i\coloneqq\left[\begin{array}{c}
				0 \\
				0 \\
				\vdots \\
				0 \\
				1
			\end{array}\right].\label{eq:Gprime}
	\end{align}
	We state the following lemma using the above matrices.
	\begin{lemma}
	\label{lem:stabilizableGH}
		Assume Assumption~\ref{ass:stabilizable} holds. Let $t \geq 0$ and $\lambda_i(t)=[\lambda_{i,1}(t) \cdots \lambda_{i,k}(t)]^\top$. 
		If for every $d\in\{1,\dots,k\}$, 
		\begin{align}
			{\rm rank}\left[\begin{array}{cc}
				A_{i0}-\lambda_{i,d}(t) I_{n_i} & B_{i0} \\
				C_{i0} & D_{i0}
			\end{array}\right]=n_i+q_i\label{eq:propertyofestimetedlambda}
		\end{align}
		then the following pair of matrices is stabilizable:
		\begin{align*}
			\left(\left[\begin{array}{cc}
				A_{i0}&0 \\
				H_iC_{i0}&G_i(\lambda_i(t))
			\end{array}\right],\left[\begin{array}{c}
				B_{i0} \\
				H_iD_{i0}
			\end{array}\right]\right).
		\end{align*}
	\end{lemma}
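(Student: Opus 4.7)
The plan is to invoke the Popov–Belevitch–Hautus (PBH) test: stabilizability of the augmented pair is equivalent to the block matrix
\[
M(s) \coloneqq \begin{bmatrix} sI-A_{i0} & 0 & B_{i0} \\ -H_iC_{i0} & sI-G_i(\lambda_i(t)) & H_iD_{i0} \end{bmatrix}
\]
having full row rank $n_i+kq_i$ for every $s\in\mathbb{C}_+$. Suppose for contradiction that $[v^\top\ w^\top]M(s)=0$ for some nonzero $[v^\top\ w^\top]$ and some $s\in\mathbb{C}_+$. Reading off the three block columns yields $v^\top(sI-A_{i0})=w^\top H_iC_{i0}$, $w^\top(sI-G_i(\lambda_i(t)))=0$, and $v^\top B_{i0}=-w^\top H_iD_{i0}$. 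Combining the first and third, a direct check shows that $[v^\top\ \ -w^\top H_i]$ is a left null vector of the Rosenbrock-type matrix $\bigl[\begin{smallmatrix}sI-A_{i0} & -B_{i0}\\ C_{i0} & D_{i0}\end{smallmatrix}\bigr]$, whose rank coincides with that of the matrix appearing in the hypothesis (\ref{eq:propertyofestimetedlambda}).

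I would then case-split on whether $s\in\sigma(G_i(\lambda_i(t)))$. By the Kronecker structure $G_i=I_{q_i}\otimes G'_i$ and the companion-form relation (\ref{eq:minpoly_Si}), this spectrum equals $\{\lambda_{i,1}(t),\dots,\lambda_{i,k}(t)\}$. If $s$ is not one of the $\lambda_{i,d}(t)$, the second relation forces $w=0$, and the remaining equations give $v^\top[sI-A_{i0}\ \ B_{i0}]=0$ with $v\ne 0$; since $s\in\mathbb{C}_+$, this contradicts Assumption~\ref{ass:stabilizable} via PBH. The subcase $s=\lambda_{i,d}(t)$ with $w=0$ is closed by the same argument, so only $s=\lambda_{i,d}(t)$ with $w\ne 0$ remains.

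To dispose of this last subcase, I would establish as an auxiliary fact that the internal-model pair $(G_i(\lambda_i(t)),H_i)$ is controllable. Because $(G'_i,H'_i)$ is in controller companion form it is controllable, so $[sI-G'_i\ \ H'_i]$ has full row rank for every $s\in\mathbb{C}$; a short Kronecker-and-permutation argument lifts this to full row rank of $[sI-G_i\ \ H_i]$. Consequently $w^\top(sI-G_i)=0$ with $w\ne 0$ forces $w^\top H_i\ne 0$, so the vector $[v^\top\ \ -w^\top H_i]$ is nonzero. This contradicts the rank hypothesis (\ref{eq:propertyofestimetedlambda}) at $\lambda_{i,d}(t)$, and we conclude $(v,w)=(0,0)$. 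Thus $M(s)$ has full row rank for every $s\in\mathbb{C}_+$, and the augmented pair is stabilizable.

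The main conceptual point, and the only subtle step, is recognizing that the PBH test for the augmented pair factors, via the left-null-vector identity above, into two independent rank conditions at the same frequency: one on the plant data (supplied at the critical frequencies $\lambda_{i,d}(t)$ by the hypothesis (\ref{eq:propertyofestimetedlambda})) and one on the internal model (supplied for free at every frequency by the companion-form controllability of $(G'_i,H'_i)$). The Kronecker lifting to controllability of $(G_i,H_i)$ is routine but must be stated carefully, since the block pattern $I_{q_i}\otimes(\cdot)$ interleaves rows and columns of the smaller PBH block.
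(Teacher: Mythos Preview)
Your proof is correct and follows the same overall strategy as the paper: apply the PBH test to the augmented pair, split on whether $s$ lies in $\sigma(G_i(\lambda_i(t)))=\{\lambda_{i,1}(t),\dots,\lambda_{i,k}(t)\}$, and in the critical case invoke controllability of $(G_i,H_i)$ together with the rank hypothesis~(\ref{eq:propertyofestimetedlambda}). Where the two arguments diverge is in the execution of the critical case $s=\lambda_{i,d}(t)$. The paper proceeds algebraically: it writes the PBH matrix as a product $M(\eta)=M_1(\eta)M_2(\eta)$, where $M_1$ carries the internal-model PBH block $[H_i\ \ G_i-\eta I]$ and $M_2$ carries the Rosenbrock block, and then applies Sylvester's rank inequality to squeeze $\operatorname{rank}M(\eta)$ to exactly $n_i+kq_i$. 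You instead argue pointwise with a left null vector, showing that any $[v^\top\ w^\top]$ annihilating $M(s)$ produces a nonzero left null vector $[v^\top\ -w^\top H_i]$ of the Rosenbrock matrix, which is impossible by hypothesis.

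Your route is slightly more elementary in that it avoids the explicit factorisation and Sylvester's inequality; it also makes the role of controllability of $(G_i,H_i)$ more transparent (it is exactly what forces $w^\top H_i\neq 0$ once $w\neq 0$ and $w^\top(sI-G_i)=0$). The paper's factorisation, on the other hand, gives a clean matrix-level statement and generalises readily to other block structures. Both are standard techniques for this class of internal-model stabilizability lemmas, and either would be acceptable here.
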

	\begin{proof}
Fix an arbitrary time $t \geq 0$, and let
\begin{align*}
M(\eta) := \left[\begin{array}{ccc}
				A_{i0}-\eta I_{n_i} & 0 & B_{i0} \\
				H_i C_{i0} & G_i(\lambda_i(t))-\eta I_{n_i} & H_i D_{i0}
			\end{array}\right].
\end{align*}
By the PBH test, the pair 
\begin{align*}
			\left(\left[\begin{array}{cc}
				A_{i0}&0 \\
				H_iC_{i0}&G_i(\lambda_i(t))
			\end{array}\right],\left[\begin{array}{c}
				B_{i0} \\
				H_iD_{i0}
			\end{array}\right]\right)
\end{align*}
is stabilizable if and only if 
\begin{align*}
{\rm rank}\ M(\eta) &= n_i + kq_i, \quad \forall \eta \in \mathbb{C}_+ \\ &\mbox{ (the closed right half complex plane)}. 
\end{align*}
Since $(A_{i0},B_{i0})$ is stabilizable by Assumption~3, ${\rm rank}\ [A_{i0}-\eta I_{n_i} \ B_{i0}] = n_i$ for all $\eta \in \mathbb{C}_+$. Also, ${\rm det}\, (G_i(\lambda_i(t))-\eta I_{n_i}) \neq 0$ for all $\eta \notin \sigma(G_i(\lambda_i(t)))$. Thus
\begin{align}\label{eq1}
{\rm rank}\ M(\eta) = n_i + kq_i, \quad \forall \eta \notin \sigma(G_i(\lambda_i(t))) \mbox{ and }\forall \eta \in \mathbb{C}_+. \tag{$\star$}
\end{align}
Write $M(\eta) = M_1(\eta) M_2(\eta)$, where
\begin{align*}
M_1(\eta) &:= \left[\begin{array}{ccc}
				I_{n_i} & 0 & 0 \\
				0 & H_i & G_i(\lambda_i(t))-\eta I_{n_i}
			\end{array}\right], \\
M_2(\eta) &:= \left[\begin{array}{ccc}
				A_{i0}-\eta I_{n_i} & 0 & B_{i0} \\
				C_{i0} & 0 & D_{i0} \\
				0 & I_{kq_i} & 0
			\end{array}\right].		
\end{align*}
Since $(G'_i(\lambda_i(t)), H'_i)$ given in (13) is in the control canonical form, the pair $(G_i(\lambda_i(t)), H_i)$ is controllable. Hence ${\rm rank}\ M_1(\eta)=n_i + kq_i$ for all $\eta \in \mathbb{C}$. On the other hand, it follows from 
\begin{align*}
\sigma(G_i(\lambda_i(t))) = \{\underbrace{\lambda_{i,1}(t),\ldots,\lambda_{i,1}(t)}_{q_i}, \ldots, \underbrace{\lambda_{i,k}(t),\ldots,\lambda_{i,k}(t)}_{q_i}\}
\end{align*}
and the condition~(14) below
\begin{align*}
			{\rm rank}\left[\begin{array}{cc}
				A_{i0}-\lambda_{i,d}(t) I_{n_i} & B_{i0} \\
				C_{i0} & D_{i0}
			\end{array}\right]=n_i+q_i, \quad \forall d=1,\dots,k
\end{align*}
that ${\rm rank}\ M_2(\eta)=n_i + kq_i + q_i$ for all $\eta \in \sigma(G_i(\lambda_i(t)))$. Therefore by Sylvester's inequality (i.e. $\min\{{\rm rank}\ M_1(\eta), {\rm rank}\ M_2(\eta)\} \geq {\rm rank}\ M(\eta) \geq {\rm rank}\ M_1(\eta) + {\rm rank}\ M_2(\eta) - (n_i + kq_i + q_i)$), we have
\begin{align}\label{eq2}
n_i + kq_i &\geq {\rm rank}\ M(\eta) \notag\\
&\geq (n_i + kq_i)+(n_i + kq_i + q_i)-(n_i + kq_i + q_i) \notag\\
&=n_i + kq_i, \quad \forall \eta \in \sigma(G_i(\lambda_i(t))). \tag{$\star \star$}
\end{align}
Combining (\ref{eq1}) and (\ref{eq2}) yields 
\begin{align*}
{\rm rank}\ M(\eta) = n_i + kq_i, \quad \forall \eta \in \mathbb{C}_+. 
\end{align*}
This proves that the pair 
\begin{align*}
			\left(\left[\begin{array}{cc}
				A_{i0}&0 \\
				H_iC_{i0}&G_i(\lambda_i(t))
			\end{array}\right],\left[\begin{array}{c}
				B_{i0} \\
				H_iD_{i0}
			\end{array}\right]\right)
\end{align*}
is stabilizable.
	\end{proof}
	
	In Lemma~\ref{lem:stabilizableGH} the sufficient condition (\ref{eq:propertyofestimetedlambda}) means that every $\lambda_{i,d}$ does not correspond to
	transmission zeros of agent $i$.
	In (\ref{eq:propertyofestimetedlambda}), $\lambda_{i,d}$ is time-varying because it is updated according to (\ref{eq:lambdatrans}).
	Since $\lambda_{i,d}(t)\in{\rm j}\mathbb{R}$ for all $t$, if agent $i$'s dynamics has purely imaginary transmission zeros,
	it is possible that (\ref{eq:propertyofestimetedlambda}) is violated. In order to satisfy (\ref{eq:propertyofestimetedlambda}) for all $t\geq0$, we make the following (simpifying) assumption.
	\begin{assumption}
		\label{ass:imaginaryaxis}
		For every agent $i\in\mathcal{V}$, there are no transmission zeros on the imaginary axis, i.e.
		\begin{align*}
			{\rm rank}\left[\begin{array}{cc}
				A_{i0}-\lambda I_{n_i} & B_{i0} \\
				C_{i0} & D_{i0}
			\end{array}\right]=n_i+q_i
		\end{align*}
		 for all $\lambda\in{\rm j}\mathbb{R}$.
	\end{assumption}
	
	If every agent $i\in\mathcal{V}$ is minimum-phase, then Assumption~\ref{ass:imaginaryaxis} is satisfied. In addition, this assumption allows transmission zeros to be on the
	open right (complex) plane, thus admitting non-minimum-phase system.
	In the case where Assumption~\ref{ass:imaginaryaxis} does not hold, it is a challenge to ensure that (\ref{eq:propertyofestimetedlambda}) holds for all $t\geq0$.
	Nevertheless, in Section~\ref{sec:moregeneral} below we shall present a novel strategy to guarantee (\ref{eq:propertyofestimetedlambda}) even in the presence of purely imaginary transmission zeros.

	From Lemma~\ref{lem:stabilizableGH} and Assumption~\ref{ass:imaginaryaxis}, we may synthesize $[K_{i1}(\lambda_i)\ K_{i2}(\lambda_i)]$ such that the matrix
	\begin{align}
		\left[\begin{array}{cc}
				A_{i0}&0 \\
				H_iC_{i0}&G_i(\lambda_i)
			\end{array}\right]+\left[\begin{array}{c}
				B_{i0} \\
				H_iD_{i0}
			\end{array}\right][K_{i1}(\lambda_i)\ K_{i2}(\lambda_i)]
			\label{eq:stablematrix}
	\end{align}
	is stable for all $t\geq0$. In addition, we choose $L_i$ such that the matrix $A_{i0}-L_iC_{i0}$ is stable under Assumption~\ref{ass:detectable}.
	
	Now we are ready to present the matrices $E_i(t),F_i(t)$ and $K_i(t)$ in the dynamic compensator (\ref{eq:compensator}):
	\begin{align}
		&E_i(\lambda_i)\coloneqq\nonumber\\
			&\ \left[\begin{array}{cc}
				A_{i0}-L_iC_{i0} & 0 \\
				0 & G_i(\lambda_i)
			\end{array}\right]+\left[\begin{array}{c} B_{i0}-L_iD_{i0} \\ 0 \end{array}\right]K_i(\lambda_i),\nonumber\\
		&F_i\coloneqq\left[\begin{array}{c}
				L_i \\
				H_i
			\end{array}\right],\nonumber\\
		&K_i(\lambda_i)\coloneqq[K_{i1}(\lambda_i)\ K_{i2}(\lambda_i)].\label{eq:EFandK}
	\end{align}

	Note that in (\ref{eq:EFandK}), $E_i$ and $K_i$ are time-varying as $\lambda_i$ is time-varying, while $F_i$ is time-invariant; and by (\ref{eq:lambdatrans}) there hold
	\begin{align*}
		&G_i(\lambda_i)\rightarrow G_i(\lambda_0)\\
		&K_i(\lambda_i)\rightarrow K_i(\lambda_0)\\
		&E_i(\lambda_i)\rightarrow E_i(\lambda_0).
	\end{align*}
	
	Using the distributed dynamic compesator (\ref{eq:compensator}), we will show in the next section that the estimated error $e_i$ and the output $z_i$ to be regulated converge to 0.

	\section{MAIN RESULT}
\label{sec:main}
	Our main result is the following.
	\begin{theorem}
	\label{thm:main}
		Consider the multi-agent system (\ref{eq:statetransision}), (\ref{eq:outputtransision}), (\ref{eq:perturbation}) and the exosystem (\ref{eq:exosystem}),
		and suppose that Assumptions~\ref{ass:apriori}-\ref{ass:imaginaryaxis} hold.
		Then for each agent $i\in\mathcal{V}$, the distributed exosystem generator (\ref{eq:Strans}) and (\ref{eq:exoest}), and the distributed dynamic compensator (\ref{eq:compensator})
		with (\ref{eq:lambdatrans}), (\ref{eq:EFandK}) solve Problem~\ref{prob:mainproblem}.
	\end{theorem}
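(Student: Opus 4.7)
The plan is to decompose the argument into three stages corresponding to the layered structure of the controller: (i) asymptotic correctness of the exosystem generator, (ii) internal stability of the time-varying closed loop, and (iii) application of the internal model principle to deduce $e_i\to 0$ and thence $z_i\to 0$.

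First, I would analyze the exosystem generator. Under Assumption~\ref{ass:spanningtree} (uniform spanning tree rooted at node~0), each of the consensus protocols (\ref{eq:Strans}), (\ref{eq:lambdatrans}), (\ref{eq:exoest}) drives the local estimates to the corresponding exosystem quantities; that is, $S_i(t)\to S_0$, $\lambda_i(t)\to\lambda_0$, and $w_i(t)-w_0(t)\to 0$ as $t\to\infty$ for every $i\in\mathcal{V}$. The convergence of $S_i$ and $\lambda_i$ follows from the standard consensus result on time-varying digraphs (the entries of $S_0$ and $\lambda_0$ are constants, so the protocol is the classical one); the convergence of $w_i$ is an ISS/cascade argument on (\ref{eq:exoest}) with the already-converging input $S_i(t)$ and the reference $w_0(t)$ generated by $S_0$ (whose eigenvalues lie on the imaginary axis by Assumption~\ref{ass:Seigen}, so $w_0$ is bounded). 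These facts can be cited from the references on adaptive distributed observers (\cite{LiuHuang:16,Cai:arxiv16}). As a consequence the time-varying matrices $G_i(\lambda_i(t))$, $K_i(\lambda_i(t))$, and $E_i(\lambda_i(t))$ converge to the constants $G_i(\lambda_0)$, $K_i(\lambda_0)$, $E_i(\lambda_0)$.

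Second, I would study the closed-loop dynamics. Writing the agent state $x_i$ together with the compensator state $\xi_i=[\hat{x}_i^\top\ \eta_i^\top]^\top$ (observer plus internal model parts), and using the estimation error $\tilde{x}_i:=x_i-\hat{x}_i$, the standard Luenberger/internal-model manipulation decouples the error dynamics into (a) the observer error governed by $A_{i0}-L_iC_{i0}$ (stable by Assumption~\ref{ass:detectable} and the choice of $L_i$), and (b) an augmented plant-plus-internal-model system governed in the nominal limit $\lambda_i=\lambda_0$ by the matrix in (\ref{eq:stablematrix}), which is stable by the design of $K_i(\lambda_0)$ based on Lemma~\ref{lem:stabilizableGH} (whose hypothesis (\ref{eq:propertyofestimetedlambda}) holds at $\lambda_0$ by Assumption~\ref{ass:transmissionzeros}, and for all intermediate $\lambda_i(t)\in\mathrm{j}\mathbb{R}^k$ by Assumption~\ref{ass:imaginaryaxis}). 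Since the closed-loop matrix is continuous in $\lambda_i$ and converges to a Hurwitz matrix, a standard perturbation / slowly-varying-systems result yields uniform exponential stability of the homogeneous part from some time onward. This is the part I expect to be the main technical obstacle: one must argue carefully that the time-varying parameter $\lambda_i(t)$ does not merely converge pointwise but does so in a way compatible with uniform exponential stability of the time-varying closed loop (e.g. by using the converging-input converging-state property, or by exhibiting a common Lyapunov function for a suitable neighborhood of $\lambda_0$).

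Third, I would invoke the internal model principle to obtain regulation. In the limiting regime, the compensator embeds a true $q_i$-copy of the minimal polynomial of $S_0$ (since $c_{i,d}(\lambda_i)\to c_{0,d}(\lambda_0)$), so by the classical internal model result \cite{FraWon:76}, robust regulation of $e_i$ holds under the parameter uncertainties $\Delta A_i,\ldots,\Delta Q_i$ in (\ref{eq:perturbation}); together with the exponential stability established in step~two and the exogenous-input disturbance $(w_i-w_0)$ vanishing as $t\to\infty$, an ISS argument gives $e_i(t)\to 0$. Finally, subtracting (\ref{eq:errortransision}) from (\ref{eq:outputtransision}) yields
\begin{align*}
z_i(t)=e_i(t)+Q_i\bigl(w_0(t)-w_i(t)\bigr),
\end{align*}
so both terms on the right vanish and hence $z_i(t)\to 0$, completing the proof.
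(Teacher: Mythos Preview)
Your three-stage decomposition matches the paper's proof exactly: convergence of the exosystem generator (the paper's Lemma~\ref{lem:Swconverge}), block-triangular stability of the closed loop via the observer/internal-model split (the paper's similarity transformation $T$ yielding (\ref{eq:stablematrix}) and $A_{i0}-L_iC_{i0}$), and regulation via the internal model plus a vanishing-perturbation argument.

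Two minor differences are worth noting. First, for what you flag as the main technical obstacle, the paper does \emph{not} use a slowly-varying or common-Lyapunov argument; it simply takes the constant limit matrix $M_i(\lambda_0)$ as the nominal Hurwitz part and treats $\tilde{M}_i:=M_i(\lambda_i)-M_i(\lambda_0)\to 0$ as a vanishing multiplicative perturbation, invoking a dedicated lemma (Lemma~\ref{lem:converge}, essentially an ISS statement for $\dot{x}=A_1 x+A_2(t)x+A_3(t)$ with $A_1$ Hurwitz and $A_2,A_3\to 0$). This sidesteps any uniformity concern about the trajectory of $\lambda_i(t)$. Second, rather than citing the internal model principle as a black box, the paper writes down the Sylvester/regulator equations explicitly for the \emph{perturbed} matrices at the limit $\lambda_0$, obtains the coordinate shift $\tilde{\eta}_i=\eta_i-X_i(\lambda_0)w_0$, and shows $z_i\to 0$ directly from (\ref{eq:ztranspr}); your detour through $e_i\to 0$ and then $z_i=e_i+Q_i(w_0-w_i)$ is equivalent but one step longer.
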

	\medskip

	Several remarks on Theorem~\ref{thm:main} are in order.
	\begin{remark}
		Theorem~\ref{thm:main} asserts that the proposed two-part distributed controller -- the distributed exosystem generator (\ref{eq:Strans}), (\ref{eq:exoest}) and the distributed dynamic compensator (\ref{eq:compensator}) -- provides the first solution to the multi-agent output regulation problem where the agents have no initial knowledge of the exosystem's internal model and the agent dynamics are uncertain. The key of our solution is the time-varying $q_i$-copy internal model, updated locally based only on information received from neighbors, which eventually converges to the exact internal model of the exosystem.
	\end{remark}	
	\begin{remark}
		When there is only one agent (i.e. $N=1$), the problem is specialized to the centralized output regulation, and Theorem~\ref{thm:main} is thus an extension of the conventional results in \cite{FraWon:76,Fra:77,knobloch2012topics,huang2004nonlinear}. Even if the (single) agent does not know the exosystem dynamics initially, the output regulation problem is solvable by the exosystem generator (\ref{eq:Strans}), (\ref{eq:exoest}) and dynamic compensator (\ref{eq:compensator}).
	\end{remark}	
	\begin{remark}	
		If the exosystem is a leader agent that possesses computation and communication abilities, then the leader can compute the roots of its own minimal
		polynomial and send the information to other connected agents. If the exosystem is some entity that cannot compute or communicate, then those agents that can measure
		the exosystem (in particular know $S_0$) compute the corresponding minimal polynomial and the roots, and send the information to the rest of the network.
	\end{remark}
	\begin{remark}
		For each agent to `learn' the internal model of the exosystem, our strategy is to make the agents reach consensus by (\ref{eq:lambdatrans})
		for the {\it roots} of the exosystem's minimal polynomial (i.e. eigenvalues of $S_0$). It might appear more straightforward to reach consensus for the {\it coefficients}
		of the exosystem's minimal polynomial; the advantage of updating $\lambda_i$ with (\ref{eq:lambdatrans}), nevertheless, is that we may directly guarantee
		the equality in (\ref{eq:transcondition}) in Assumption~\ref{ass:transmissionzeros}.
	\end{remark}
	\begin{remark}
		\label{remark:S0_is_companion}
		In (\ref{eq:Strans}), there are $r\times r$ entries to update and communicate. If the minimal polynomial of $S_0$ equals its characteristic polynomial ($k=r$) and $S_0$ is in the companion form
		\begin{align*}
			S_0=\left[\begin{array}{cccc}
				0&1&\cdots&0 \\
				0&0&\cdots&0 \\
				\vdots&\vdots&\ddots&\vdots \\
				0&0&\cdots&1 \\
				-c_{0,r}(\lambda_0)&-c_{0,r-1}(\lambda_0)&\cdots&-c_{0,1}(\lambda_0)
			\end{array}\right]
		\end{align*}
		where $c_{0,d}(\lambda_0)$, $d=1,\dots,r$, and $\lambda_0$ are as defined in (\ref{eq:minpoly}), then each agent does not need to exchange and update the whole $S_i$.
		Each agent only needs to exchange and update $\lambda_i = [\lambda_{i,1},\dots,\lambda_{i,k}]^\top$ by (\ref{eq:lambdatrans}) and make $S_i$ also in the companion form.
	\end{remark}
	\begin{remark}
		\label{remark:reduce}
		In the equation (\ref{eq:lambdatrans}), we do not need to use all entries of $\lambda_i = [\lambda_{i,1},\dots,\lambda_{i,k}]^\top$, because the eigenvalues of the real matrices $S_i$ must be in conjugate pairs.
		Indeed, for all $i\in\hat{\mathcal{V}}$ we may write $\lambda_i$ in the following form	
		\begin{align*}
			\lambda_i(t)=\left\{\begin{array}{ll}
				\left[\hat{\lambda}_i(t)^\top\ \hat{\lambda}_i^*(t)^\top\right]^\top, & k\ {\rm is\ an\ even\ number} \\
				\left[\hat{\lambda}_i(t)^\top\ \hat{\lambda}_i^*(t)^\top\ 0\right]^\top, & k\ {\rm is\ an\ odd\ number}
			\end{array}\right.
		\end{align*}	
		where $\hat{\lambda}_i\in{\rm j}\mathbb{R}^{\lfloor k/2 \rfloor}$.	
		From this form, each agent can make their entire $\lambda_i$ after exchanging and updating only $\hat{\lambda}_i$.
	\end{remark}
	
	To prove Theorem~\ref{thm:main}, we need the following two lemmas. Their proofs are presented in Appendix (alternative proofs can also be found in \cite{LiuHuang:16}).
	
	The first lemma states a stability result for a particular type of time-varying systems.
	\begin{lemma}
		\label{lem:converge}
		Consider
		\begin{align}
			\dot{x}(t)=A_1(t)x(t)+A_2(t)x(t)+A_3(t)
			\label{eq:A1A2A3}
		\end{align}
		where $A_1(t)\in\mathbb{R}^{n\times n},A_2(t)\in\mathbb{R}^{n\times n},A_3(t)\in\mathbb{R}^{n}$ are piecewise continuous and bounded on $[0,\infty)$.
		Suppose that the origin is a uniformly exponentially stable equilibrium of $\dot{x}=A_1(t)x$, and $A_2(t)\rightarrow0$, $A_3(t)\rightarrow0$
		as $t\rightarrow\infty$.
		Then $x(t)\rightarrow0$ as $t\rightarrow\infty$.
	\end{lemma}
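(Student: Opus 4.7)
The plan is to treat this as a vanishing perturbation of a uniformly exponentially stable linear time-varying system, and combine the variation-of-constants formula with a Gronwall-type argument. Let $\Phi(t,\tau)$ denote the state transition matrix of the nominal system $\dot{x}=A_1(t)x$. By the assumed uniform exponential stability, there exist constants $M \geq 1$ and $\alpha > 0$ such that $\|\Phi(t,\tau)\| \leq M e^{-\alpha (t-\tau)}$ for all $t\geq\tau\geq 0$. Applying variation of constants to the full system yields
\begin{align*}
x(t) = \Phi(t,0)x(0) + \int_0^t \Phi(t,\tau)\bigl[A_2(\tau)x(\tau) + A_3(\tau)\bigr]\,d\tau.
\end{align*}

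The first step is to establish that $x(t)$ is bounded on $[0,\infty)$. Taking norms, using the exponential bound on $\Phi$, and multiplying through by $e^{\alpha t}$, I would obtain an inequality of the form
\begin{align*}
e^{\alpha t}\|x(t)\| \leq M\|x(0)\| + M\int_0^t e^{\alpha \tau}\|A_2(\tau)\|\cdot\|x(\tau)\|\,d\tau + M\int_0^t e^{\alpha \tau}\|A_3(\tau)\|\,d\tau.
\end{align*}
Since $A_3$ is bounded on $[0,\infty)$, the last term grows at most like $e^{\alpha t}$ times a constant, so Gronwall's inequality applied to $e^{\alpha t}\|x(t)\|$ together with the boundedness of $A_2$ gives a bound of the form $\|x(t)\| \leq C_1 e^{\beta t}$ for some $C_1, \beta$; with a more careful split this can be improved so that $\|x(t)\|$ stays bounded, since eventually $A_2$ becomes arbitrarily small.

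The second and main step is to drive $\|x(t)\|$ to zero. Given any $\epsilon > 0$, choose $T$ large enough that $\|A_2(t)\| \leq \delta$ and $\|A_3(t)\| \leq \epsilon$ for all $t \geq T$, where $\delta$ is a small constant to be fixed. For $t \geq T$ I would split the integral at $\tau = T$:
\begin{align*}
x(t) = \Phi(t,T)x(T) + \int_T^t \Phi(t,\tau)\bigl[A_2(\tau)x(\tau) + A_3(\tau)\bigr]\,d\tau.
\end{align*}
The first term decays like $M\|x(T)\| e^{-\alpha(t-T)}$ and so vanishes as $t\to\infty$. For the integral, using $\|\Phi(t,\tau)\|\leq Me^{-\alpha(t-\tau)}$ and the smallness of $A_2, A_3$ on $[T,\infty)$, a Gronwall argument on the interval $[T,t]$ yields $\|x(t)\| \leq M\|x(T)\|e^{-(\alpha-M\delta)(t-T)} + \frac{M\epsilon}{\alpha-M\delta}$, provided $\delta$ is chosen so that $M\delta < \alpha$. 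Letting $t\to\infty$ gives $\limsup \|x(t)\| \leq M\epsilon/(\alpha - M\delta)$, and since $\epsilon$ was arbitrary, $x(t)\to 0$.

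The main obstacle is the term $A_2(\tau)x(\tau)$ inside the integral: because $x$ appears on both sides, I cannot simply bound it by the tail smallness of $A_2$. The resolution is to first secure boundedness of $x$ on all of $[0,\infty)$ and then, on the tail $[T,\infty)$ where $A_2$ is small enough that $M\|A_2\|<\alpha$, absorb the perturbation into the exponential decay rate via Gronwall. This reduces the problem to the standard asymptotic estimate for an exponentially stable system perturbed by a vanishing input.
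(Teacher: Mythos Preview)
Your argument is correct. The tail-splitting step is the heart of the proof: once you fix $\delta<\alpha/M$ and then, for each $\epsilon>0$, pick $T$ with $\|A_2(t)\|\le\delta$ and $\|A_3(t)\|\le\epsilon$ on $[T,\infty)$, the generalized Gronwall inequality applied to $e^{\alpha(t-T)}\|x(t)\|$ indeed yields $\limsup_{t\to\infty}\|x(t)\|\le M\epsilon/(\alpha-M\delta)$, and letting $\epsilon\downarrow 0$ finishes it. Your ``Step~1'' on global boundedness is in fact unnecessary: existence of $x$ on any finite interval $[0,T]$ is automatic from the bounded, piecewise-continuous coefficients, and your Step~2 bound on $[T,\infty)$ already controls the tail without needing a prior uniform bound. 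You could safely drop Step~1 and state Step~2 directly.

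Your route is genuinely different from the paper's. The paper works entirely in the Lyapunov/ISS framework: it invokes a converse Lyapunov theorem for $\dot x=A_1(t)x$, then applies Khalil's vanishing-perturbation results to conclude that $\dot x=(A_1(t)+A_2(t))x$ is again uniformly exponentially stable, builds a second quadratic Lyapunov function for that system, and finally treats $A_3(t)$ as a vanishing input via input-to-state stability (citing Sontag). By contrast, you stay entirely within classical linear-systems machinery: the transition-matrix exponential bound, variation of constants, and Gronwall. Your approach is more elementary and self-contained, requiring no converse Lyapunov theorems or ISS theory; the paper's approach is higher-level and modular, delegating the two perturbation terms to off-the-shelf nonlinear results. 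Both are valid, and yours is arguably the more transparent for this purely linear setting.
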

	
	The second lemma asserts that the distributed exosystem generators proposed in Section~\ref{subsec:deg} synchronize with the exosystem.
	\begin{lemma}
		\label{lem:Swconverge}
		Consider the distributed exosystem generator (\ref{eq:Strans}) and (\ref{eq:exoest}).
		If Assumption~\ref{ass:spanningtree} holds, then
		\begin{align*}
			\lim_{t\to\infty} (S_i(t)-S_0)=0,\quad\lim_{t\to\infty} (w_i(t)-w_0)=0
		\end{align*}
		for all $S_i(0), w_i(0)$.
	\end{lemma}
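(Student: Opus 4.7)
The plan is to prove the two claims in sequence, mirroring the cascade structure of the generator: first establish $S_i(t) \to S_0$, then use that to drive $w_i(t) \to w_0(t)$ via Lemma~\ref{lem:converge}.

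For the first claim, I observe that the exosystem is a ``sink'' in the graph-theoretic sense: $a_{0j}(t) \equiv 0$ forces $\dot{S}_0 \equiv 0$, so $S_0$ acts as a stubborn leader for the linear protocol (\ref{eq:Strans}). Since (\ref{eq:Strans}) acts entrywise on the $r\times r$ matrices, I would decouple it into $r^2$ scalar leader–follower consensus dynamics of the form $\dot{\xi}_i = \sum_j a_{ij}(t)(\xi_j - \xi_i)$ with a constant leader value $(S_0)_{pq}$. Under Assumption~\ref{ass:spanningtree}, which guarantees a uniformly persistent spanning tree rooted at node~0, together with the uniform lower bound $a_{ij}(t) \geq \epsilon$ when active and the piecewise-continuous boundedness of $a_{ij}$, the classical leader–follower consensus result for time-varying digraphs (e.g.\ \cite{RenBeard:08}) yields uniform exponential convergence of each entry. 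Hence $S_i(t) - S_0 \to 0$ for every $i \in \mathcal{V}$.

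For the second claim, I set $\tilde{w}_i := w_i - w_0$ (so $\tilde{w}_0 \equiv 0$) and differentiate to obtain $\dot{\tilde{w}}_i = S_0 \tilde{w}_i + (S_i - S_0) w_i + \sum_j a_{ij}(t)(\tilde{w}_j - \tilde{w}_i)$. To uncouple the $S_0$ drift from the consensus operator, I introduce the change of variable $\eta_i := e^{-S_0 t}\tilde{w}_i$; using $e^{-S_0 t}S_0 = S_0 e^{-S_0 t}$ and $w_i = e^{S_0 t}\eta_i + e^{S_0 t}w_0(0)$, a direct computation yields $\dot{\eta}_i = \sum_j a_{ij}(t)(\eta_j - \eta_i) + D_i(t)\bigl(\eta_i + w_0(0)\bigr)$ where $D_i(t) := e^{-S_0 t}(S_i - S_0)e^{S_0 t}$. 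Stacking over $i=1,\dots,N$ puts the dynamics into the form required by Lemma~\ref{lem:converge} with $A_1(t) = -L_1(t) \otimes I_r$ (the grounded Laplacian, which is UES by the leader–follower argument of Step~1), $A_2(t) = \mathrm{blockdiag}(D_i(t))$, and $A_3(t) = \mathrm{col}(D_i(t)w_0(0))$. Lemma~\ref{lem:converge} then delivers $\eta(t) \to 0$, hence $\tilde{w}(t) \to 0$, provided $A_2(t), A_3(t) \to 0$.

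The delicate step — and the one I expect to be the main obstacle — is verifying $D_i(t) \to 0$. Step~1 supplies $S_i - S_0 \to 0$ at an exponential rate determined by the uniform-spanning-tree constant $T$ and the lower bound $\epsilon$, but the conjugation by $e^{\pm S_0 t}$ could in principle amplify the residual if $S_0$ had eigenvalues with nonzero real part. In the setting in which this lemma is invoked, Assumption~\ref{ass:Seigen} forces $\sigma(S_0) \subset \mathrm{j}\mathbb{R}$, so $\|e^{\pm S_0 t}\|$ grows at most polynomially and is dominated by the exponential decay of $S_i - S_0$; this gives $D_i(t) \to 0$ cleanly and closes the application of Lemma~\ref{lem:converge}. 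The required comparison between the consensus rate of Step~1 and the polynomial growth of the conjugating exponentials is the essential technical point of the proof.
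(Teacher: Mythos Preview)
Your argument for $S_i \to S_0$ matches the paper's (they cite Moreau's theorem; you cite the equivalent leader--follower result). For $w_i \to w_0$ the two proofs diverge. The paper does \emph{not} change variables: it takes $A_1(t) = I_N \otimes S_0 - L^-(t) \otimes I_r$ directly, invokes Lemma~1 of \cite{scardovi2009synchronization} to certify that this combined drift-plus-Laplacian system is uniformly exponentially stable, and then sets $A_2 = \mathrm{diag}(\tilde S_1,\dots,\tilde S_N)$ and $A_3 = \mathrm{diag}(\tilde S_1,\dots,\tilde S_N)(\mathbf{1}_N \otimes w_0)$. Your route---conjugating by $e^{-S_0 t}$ so that $A_1$ reduces to the bare grounded Laplacian---is more self-contained, since it recycles the UES already established in Step~1 and avoids the external synchronization lemma. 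Both approaches silently need Assumption~\ref{ass:Seigen} (the paper's $A_3$ contains $w_0(t)=e^{S_0 t}w_0(0)$; yours contains $e^{\pm S_0 t}$), and you are commendably explicit about this.

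One point to tighten: your final implication ``$\eta(t)\to 0$, hence $\tilde w(t)\to 0$'' is not immediate, because $\tilde w_i = e^{S_0 t}\eta_i$ and Lemma~\ref{lem:converge} as \emph{stated} only gives asymptotic convergence of $\eta$. If $S_0$ has a nontrivial Jordan block on the imaginary axis, $\|e^{S_0 t}\|$ grows polynomially and could outpace a merely asymptotic $\eta\to 0$. You need to extract the \emph{exponential} rate of decay of $\eta$ (which the proof of Lemma~\ref{lem:converge} does deliver, via ISS with an exponentially vanishing input) before multiplying back by $e^{S_0 t}$. The paper's route sidesteps this bookkeeping by applying Lemma~\ref{lem:converge} directly to $\tilde w$ rather than to a transformed variable.
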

	
	Now we are ready to prove Theorem~\ref{thm:main}.
	
	{\noindent\hspace{2em}{\itshape Proof of Theorem~\ref{thm:main}: }}
		Let $\eta_i\coloneqq[x_i^\top\ \xi_i^\top]^\top$ be the combined state.
		From (\ref{eq:statetransision}), (\ref{eq:outputtransision}), (\ref{eq:errortransision}) and (\ref{eq:EFandK}), we derive
		\begin{align}
			\dot{\eta}_i &=M_i(\lambda_i)\eta_i+\left[\begin{array}{c}
					0 \\
					F_iQ_i
				\end{array}\right]w_i+\left[\begin{array}{c}
					P_i \\
					0
				\end{array}\right]w_0\label{eq:zetatranspr}\\
				z_i&=[C_i\ D_iK_i(\lambda_i)]\eta_i+Q_iw_0\label{eq:ztranspr}.
		\end{align}
		where
		\begin{align}
			&M_i(\lambda_i)\coloneqq \left[\begin{array}{cc} A_i & B_iK_i(\lambda_i)\\ F_iC_i & E_i(\lambda_i)+F_iD_iK_i(\lambda_i)\end{array} \right]\label{eq:matM}\\
					&\qquad\quad= M_{i0}(\lambda_i)+\Delta M_{i}(\lambda_i),\nonumber\\
			&M_{i0}(\lambda_i)=\nonumber\\
				&\scalebox{0.85}{$\left[\begin{array}{ccc}
					A_{i0} & B_{i0}K_{i1}(\lambda_i) & B_{i0}K_{i2}(\lambda_i) \\
					L_iC_{i0} & A_{i0}-L_iC_{i0}+B_{i0}K_{i1}(\lambda_i) & B_{i0}K_{i2}(\lambda_i)\\
					H_iC_{i0} & H_iD_{i0}K_{i1}(\lambda_i) & G_i(\lambda_i)+H_iD_{i0}K_{i2}(\lambda_i)
				\end{array}\right]$},\nonumber\\
			&\Delta M_{i}(\lambda_i)=\nonumber\\
				&\scalebox{0.75}{$\left[\begin{array}{ccc}
					\Delta A_i & \Delta B_iK_{i1}(\lambda_i) & \Delta B_iK_{i2}(\lambda_i)\\
					L_i\Delta C_i & L_i\Delta D_iK_{i1}(\lambda_i) & L_i\Delta D_iK_{i2}(\lambda_i)\\
					-\Delta A_i + L_i\Delta C_i & (L_i\Delta D_i - \Delta B_i)K_{i1}(\lambda_i) & (L_i\Delta D_i - \Delta B_i)K_{i2}(\lambda_i)
				\end{array}\right]	$}.\nonumber
		\end{align}
		First, we define
		\begin{align*}
		T \coloneqq \left[\begin{array}{ccc}
					I & 0 & 0 \\
					0 & 0 & I \\
					-I & I & 0
				\end{array}\right]
		\end{align*}
		and obtain
		\begin{align*}
			&TM_{i0}(\lambda_i)T^{-1} = \nonumber\\
				&\scalebox{0.8}{$\left[\begin{array}{cc|c}
					A_{i0}+B_{i0}K_{i1}(\lambda_i)		& B_{i0}K_{i2}(\lambda_i)						& B_{i0}K_{i1}(\lambda_i)\\
					H_iC_{i0}+H_iD_{i0}K_{i1}(\lambda_i)	& G_i(\lambda_i)+H_iD_{i0}K_{i2}(\lambda_i)	& H_iD_{i0}K_{i1}(\lambda_i)\\\hline
					0										& 0												& A_{i0}-L_iC_{i0}
				\end{array}\right]$}.
		\end{align*}
		Its upper-left submatrix equals (\ref{eq:stablematrix}) and thus the submatrix is stable.
		Also $A_i-L_iC_i$ is stable. Moreover $M_{i0}(\lambda_i)$ and $TM_{i0}(\lambda_i)T^{-1}$ are similar.
		Therefore $M_{i0}(\lambda_i)$ is stable for all $t\geq0$.
		By the continuity of eigenvalues, if the term $\Delta M_i(\lambda_i)$ is sufficiently small, then $M_i(\lambda_i)$ remains stable for all $t\geq0$. Indeed, as long as the uncertainty parts $\Delta A_i,\Delta B_i,\Delta C_i,\Delta D_i,\Delta P_i,\Delta Q_i$ in (\ref{eq:perturbation}) are such that the term $\Delta M_i(\lambda_i)$ does not perturb the stable eigenvalues of $M_{i0}(\lambda_i)$ to the closed right-hand-side of the complex plane, $M_i(\lambda_i)$ remains stable for all $t\geq0$.

		Next, from Assumption~\ref{ass:Seigen} and the above statement $\sigma(S_0)\cap\sigma(M_i(\lambda_0))=\emptyset$, and thus the following equations
		\begin{align}
			X_i(\lambda_0)S_0&=M_i(\lambda_0)X_i(\lambda_0)+\left[\begin{array}{c}
					P_i \\
					F_iQ_i
				\end{array}\right]\nonumber\\
				0&=[C_i\ D_iK_i(\lambda_0)]X_i(\lambda_0)+Q_i\label{eq:sylvester}
		\end{align}
		have a unique solution $X_i(\lambda_0)$ from \cite[Appendix~A]{knobloch2012topics}.
		Let $\tilde{\eta}_i\coloneqq\eta_i-X_i(\lambda_0)w_0$, $\tilde{M}_i\coloneqq M_i(\lambda_i)-M_i(\lambda_0)$.
		Then from (\ref{eq:zetatranspr}) and (\ref{eq:sylvester}), we obtain
		\begin{align*}
				\scalebox{0.9}{$\dot{\tilde{\eta}}_i$}&\scalebox{0.9}{$=\dot{\eta}_i-X_i(\lambda_0)\dot{w}_0$}\nonumber\\
				&\scalebox{0.9}{$=M_i(\lambda_i)\eta_i+\left[\begin{array}{c}
					0 \\
					F_iQ_i
				\end{array}\right]w_i+\left[\begin{array}{c}
					P_i \\
					0
				\end{array}\right]w_0-X_i(\lambda_0)S_0w_0$}\nonumber\\
				&\scalebox{0.9}{$=\left(\tilde{M}_i+M_i(\lambda_0)\right)\left(\tilde{\eta}_i+X_i(\lambda_0)w_0\right)+\left[\begin{array}{c}
					0 \\
					F_iQ_i
				\end{array}\right]w_i$}\nonumber\\
				&\quad\scalebox{0.9}{$+\left[\begin{array}{c}
					P_i \\
					0
				\end{array}\right]w_0-X_i(\lambda_0)S_0w_0$}\nonumber\\
				&\scalebox{0.9}{$=M_i(\lambda_0)\tilde{\eta}_i+\tilde{M}_i\tilde{\eta}_i+\tilde{M}_iX_i(\lambda_0)w_0+\left[\begin{array}{c}
					0 \\
					F_iQ_i
				\end{array}\right](w_i-w_0)$}\nonumber\\
				&\quad\scalebox{0.9}{$+\left(M_i(\lambda_0)X_i(\lambda_0)+\left[\begin{array}{c}
					P_i \\
					F_iQ_i
				\end{array}\right]-X_i(\lambda_0)S_0\right)w_0$}\nonumber\\
				&=\scalebox{0.9}{$\left(M_i(\lambda_0)\tilde{\eta}_i\right)+\left(\tilde{M}_i\tilde{\eta}_i\right)$}\nonumber\\
				&\quad\scalebox{0.9}{$+\left(\tilde{M}_iX_i(\lambda_0)w_0+\left[\begin{array}{c}
					0 \\
					F_iQ_i
				\end{array}\right](w_i-w_0)\right)$}.
		\end{align*}
		From Lemma~\ref{lem:Swconverge}, $w_i-w_0\rightarrow0$. Moreover, $M_i(\lambda_0)$ is stable and $\tilde{M}_i\rightarrow0$.
		Therefore, $\tilde{\eta}_i\rightarrow0$ from Lemma~\ref{lem:converge}.
		
		Furthermore from (\ref{eq:ztranspr}) and (\ref{eq:sylvester}), we obtain
		\begin{align*}
			z_i&=[C_i\ D_iK_i(\lambda_i)]\left(\tilde{\eta}_i+X_i(\lambda_0)w_0\right)+Q_iw_0\nonumber\\
			&=[C_i\ D_iK_i(\lambda_i)]\tilde{\eta}_i+\left([C_i\ D_iK_i(\lambda_i)]X_i(\lambda_0)+Q_i\right)w_0.
		\end{align*}
		Since $\tilde{\eta}_i\rightarrow0$ and
		\begin{align*}
			&[C_i\ D_iK_i(\lambda_i)]X_i(\lambda_0)+Q_i\nonumber\\
			&\quad\rightarrow[C_i\ D_iK_i(\lambda_0)]X_i(\lambda_0)+Q_i=0,
		\end{align*} 
		we conclude that $z_i(t)\rightarrow0$ as $t\rightarrow\infty$.
	{\hspace*{\fill}~\QED\par\endtrivlist\unskip}

	\begin{remark}\label{rem:perturbation}
		Note from the proof above that the uncertainty parts $\Delta A_i,\Delta B_i,\Delta C_i,\Delta D_i,\Delta P_i,\Delta Q_i$ need not be small.
		In particular, the matrices $\Delta P_i,\Delta Q_i$ can be arbitrary, and the matrices $\Delta A_i,\Delta B_i,\Delta C_i,\Delta D_i$ have only to satisfy that the matrix $M_i(\lambda_i)$ in (\ref{eq:matM}) is stable for all $t\geq0$.
	\end{remark}
	This requirement on the uncertainty parts is standard for centralized output regulation as well \cite{Dav:76}, and ideally should have been stated as a sufficient condition in Theorem~\ref{thm:main}. Inasmuch as this condition is rather clumsy (involving the matrix (\ref{eq:matM})) and for clarity of presentation, we choose to state it here as a remark. 

	\section{PURELY IMAGINARY TRANSMISSION ZEROS}
\label{sec:moregeneral}
	In this section, we generalize Theorem~\ref{thm:main} by designing a distributed controller for the case where Assumption~\ref{ass:imaginaryaxis} does not hold,
	i.e. there exist transmisssion zeros of the agents on the imaginary axis.
	In this case, because each (vector) $\lambda_i = [\lambda_{i,1},\dots,\lambda_{i,k}]^\top$ is updated continuously, entries of $\lambda_i$ may coincide with the transmission zeros of agent $i$, which would violate (\ref{eq:propertyofestimetedlambda}).
	Consequently we cannot design $K_i(\lambda_i)$ with Lemma~\ref{lem:stabilizableGH}.
	
	In order to choose the local estimate $\lambda_i$ satisfying the condition (\ref{eq:propertyofestimetedlambda}) and design $K_i(\lambda_i)$, 
	$\lambda_i$ must converge to $\lambda_0$ and at the same time avoid the transmission zeros of the agent $i$.
	Fig.~\ref{fig:routelambda} shows examples of the trajectory of $\lambda_{i,d}(d\in\{1,\dots,k\})$. The circles and the crosses represent
	respectively the transmission zeros of the agent $i$ and the eigenvalue $\lambda_{0,d}$ of $S_0$.
	\begin{figure}[t]
		\centering
		\includegraphics[scale=1.0]{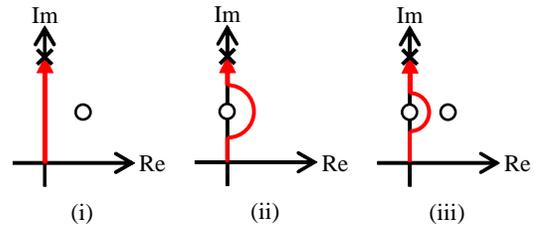}
		\caption{The trajectory of estimated eigenvalues $\lambda_{i,d}$; the circles represent the transmission zeros of the agent $i$, 
					the crosses represent the eigenvalues of $S_0$, and the arrows represent the trajectories of $\lambda_{i,d}(t).$}
		\label{fig:routelambda}
	\end{figure}
	The initial value $\lambda_{i,d}(0)$ is in ${\rm j}\mathbb{R}$ and $\lambda_{i,d}(t)$ moves toward $\lambda_{0,d}$.
	We divide the arrangement of transmission zeros into three cases:
	\begin{enumerate}
	\renewcommand{\labelenumi}{(\roman{enumi})}
		\item If there is no purely imaginary transmission zero of agent $i$ (see Fig.~\ref{fig:routelambda}(i)), then Assumption~\ref{ass:imaginaryaxis} holds and we need no further control design.
		\item If there is a purely imaginary transmission zero of agent $i$, and $\lambda_{i,d}(t)$ moves close to it (see Fig.~\ref{fig:routelambda}(ii)), then $\lambda_{i,d}(t)$ should move in a semicircle dented to the right around the transmission zero. By moving to the right, $\lambda_{i,d}(t)$ is always in $\mathbb{C}_+$ and thus Lemma~\ref{lem:stabilizableGH} is guaranteed for all $t\geq0$.
		\item If there is a purely imaginary transmission zero of agent $i$, and there are also other transmission zeros on the open right-half-plane (see Fig.~\ref{fig:routelambda}(iii)), the radius of semicircle should be smaller than (e.g. half of) the distance between these transmission zeros.
	\end{enumerate}	

	To formalize the above idea, we define several quantities.
	Let
	\begin{align}
		&\Pi_i\coloneqq\left\{s\in\mathbb{C}_+\ \middle|\ {\rm rank}\left[\begin{array}{cc}
			A_{i0}-sI_{n_i} & B_{i0} \\
			C_{i0} & D_{i0}
		\end{array}\right]<n_i+q_i \right\}\label{eq:TrZeroP}
	\end{align}
	be the set of closed right-half-plane transmission zeros of agent $i(\in\mathcal{V})$ and
	\begin{align}
		\tilde{\Pi}_i\coloneqq\left\{s\in\Pi_i\ \middle|\ {\rm Re}(s)=0 \right\}\label{eq:TrZeroImagP}
	\end{align}
	the subset of purely imaginary transmission zeros. We do not need to avoid open left-half-plane transmission zeros because $\lambda_{i,d}(t)\in\mathbb{C}_+$.
	Note that Assumption~\ref{ass:transmissionzeros} and $\Pi_i\cap\sigma(S_0)=\emptyset$ are equivalent, and Assumption~\ref{ass:imaginaryaxis} holds if and only if $\tilde{\Pi}_i=\emptyset$.

	We define a new function. For two sets $C_1,C_2\subseteq\mathbb{C}$ of finite number of complex numbers, define the {\it distance} between $C_1$ and $C_2$ by
	\begin{align*}
		{\rm dist}(C_1, C_2)\coloneqq\min\left\{|c_1-c_2|\ \middle|\  c_1\in C_1,c_2\in C_2\right\}.
	\end{align*}	
	Then define the radius $\rho_i\geq0$ of the semicircle shown in Fig.~\ref{fig:routelambda} as
	\begin{align}
		&\rho_i\coloneqq\nonumber\\
		&\scalebox{0.98}{$\left\{\begin{array}{ll}
			0,&\tilde{\Pi}_i=\emptyset\\
			\frac{1}{2}{\rm dist}(\sigma(S_0), \tilde{\Pi}_i), &\Pi_i\setminus\tilde{\Pi}_i=\emptyset\\
			\frac{1}{2}\min\{{\rm dist}(\sigma(S_0), \tilde{\Pi}_i), {\rm dist}(\Pi_i\setminus\tilde{\Pi}_i, \tilde{\Pi}_i)\}, &{\rm otherwise}.
		\end{array}\right.$}\label{eq:radiusofsemicircle}
	\end{align}
	This $\rho_i$ has three cases:
	\begin{enumerate}
	\renewcommand{\labelenumi}{(\roman{enumi})}
		\item If there is no purely imaginary transmission zeros of agent $i$, i.e. $\tilde{\Pi}_i=\emptyset$, then Assumption~\ref{ass:imaginaryaxis} holds and simply the radius is zero.
		\item If there are only purely imaginary transmission zeros of agent $i$, i.e. $\Pi_i\setminus\tilde{\Pi}_i=\emptyset$, then
			the radius is half of the distance between $\sigma(S_0)$ and $\tilde{\Pi}_i$.
		\item If there are transmission zeros of agent $i$ on both purely imaginary axis and right-half-plane, then
			the radius is half of the smaller of the distance between $\sigma(S_0)$ and $\tilde{\Pi}_i$ and that between $\Pi_i\setminus\tilde{\Pi}_i$ and $\tilde{\Pi}_i$.
	\end{enumerate}
	In the definition of $\rho_i$, we consider the coefficient $1/2$ for simplicity, but we can choose any coefficient from $(0,1)$.
	By using this $\rho_i$, we ensure the radius of the semicircle for the three cases as illustrated in Fig.~\ref{fig:routelambda}.
	
	Then we consider for all $i\in\hat{\mathcal{V}}$,
	\begin{align}
		&\lambda_i(t)=\alpha_i(t)+{\rm j}\beta_i(t)\in\mathbb{C}^{k},\ \alpha_i(t),\beta_i(t)\in\mathbb{R}^k\label{eq:newlambdatrans}\\
		&\dot{\beta}_i(t)=\sum_{j=0}^N a_{ij}(t)\left(\beta_j(t)-\beta_i(t)\right)\label{eq:betatrans}\\
		&\alpha_{i,d}(t)=\left\{
			\begin{array}{ll}
				0,&\gamma_{i,d}(t)\geq\rho_i{\rm\ or\ }i=0\\
				\sqrt{\rho_i^2-\gamma_{i,d}^2(t)},&\gamma_{i,d}(t)<\rho_i{\rm\ and\ }i\neq0
			\end{array}\right.\label{eq:alphatrans}
	\end{align}
	where
	\begin{align*}
		&d=1,\dots,k\\
		&\alpha_i(t)=[\alpha_{i,1}(t)\ \cdots\ \alpha_{i,k}(t)]^\top\\
		&\beta_i(t)=[\beta_{i,1}(t)\ \cdots\ \beta_{i,k}(t)]^\top\\
		&\gamma_{i,d}(t)\coloneqq\left\{
			\begin{array}{ll}
				0,&\tilde{\Pi}_i=\emptyset\\
				{\rm dist}\left(\{{\rm j}\beta_{i,d}(t)\},\tilde{\Pi}_i\right),&{\rm otherwise}.
			\end{array}\right.
	\end{align*}
	Note that $\gamma_{i,d}$ means the distance between ${\rm j}\beta_i$ and its closest (purely imaginary) transmission zero, and $\alpha_{i,d}(t)\geq0$ for all $t\geq0$.
	Moreover from Assumption~\ref{ass:spanningtree}, $\dot{\beta}_0(t) = 0, \alpha_0(t) = 0$ for all $t\geq0$.
	Finally, we update $\lambda_i$ by (\ref{eq:betatrans}) and (\ref{eq:alphatrans}) instead of (\ref{eq:lambdatrans}). It follows immediately from these definitions the following result.
	\begin{lemma}
		\label{lem:avoidingtranszeros}
		Consider the equations (\ref{eq:newlambdatrans}), (\ref{eq:betatrans}), (\ref{eq:alphatrans}). If Assumption~\ref{ass:spanningtree} holds, then
		$\lambda_{i,d}(t),i\in\mathcal{V}$, $d=1,\dots,k$, converge to $\lambda_{0,d}$ while avoiding transmission zeros of the agent $i$.
	\end{lemma}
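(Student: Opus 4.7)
Write $\lambda_i = \alpha_i + {\rm j}\beta_i$. Two things must be established for each $i \in \mathcal{V}$ and $d \in \{1,\dots,k\}$: (a) $\lambda_{i,d}(t) \to \lambda_{0,d}$, and (b) $\lambda_{i,d}(t) \notin \Pi_i$ for all $t \geq 0$. Only $\Pi_i$ (rather than every transmission zero) is relevant since $\alpha_{i,d}(t) \geq 0$ by construction ensures $\lambda_{i,d}(t) \in \mathbb{C}_+$.

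For (a), the plan is to argue that (\ref{eq:betatrans}) is an ordinary consensus protocol on $\hat{\mathcal{G}}(t)$ whose root $\beta_0$ is stationary (since $a_{0j} \equiv 0$). By Assumption~\ref{ass:spanningtree} and standard time-varying consensus results (\cite{RenBeard:08}), $\beta_i(t) \to \beta_0$; by Assumption~\ref{ass:Seigen}, $\lambda_{0,d} = {\rm j}\beta_{0,d}$. If $\tilde{\Pi}_i = \emptyset$, then $\alpha_{i,d} \equiv 0$ and the conclusion is immediate. Otherwise, the factor $1/2$ built into $\rho_i$ yields $\gamma_{i,d}(t) \to {\rm dist}(\{\lambda_{0,d}\},\tilde{\Pi}_i) \geq {\rm dist}(\sigma(S_0),\tilde{\Pi}_i) \geq 2\rho_i > \rho_i$, so $\alpha_{i,d}(t) = 0$ for all sufficiently large $t$, after which $\lambda_{i,d}(t) = {\rm j}\beta_{i,d}(t) \to \lambda_{0,d}$.

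For (b), I would split on the sign of $\alpha_{i,d}(t)$. When $\alpha_{i,d}(t) = 0$, $\lambda_{i,d}(t)$ is purely imaginary; if $\tilde{\Pi}_i = \emptyset$ then $\Pi_i \cap {\rm j}\mathbb{R} = \emptyset$ so there is nothing to avoid, while if $\tilde{\Pi}_i \neq \emptyset$ the relevant branch of (\ref{eq:alphatrans}) forces $\gamma_{i,d}(t) \geq \rho_i > 0$, which is exactly the distance from $\lambda_{i,d}(t)$ to $\tilde{\Pi}_i$. When $\alpha_{i,d}(t) > 0$, the defining formula gives $\alpha_{i,d}^2(t) + \gamma_{i,d}^2(t) = \rho_i^2$, so $\lambda_{i,d}(t)$ lies at distance exactly $\rho_i$ from the nearest $z \in \tilde{\Pi}_i$. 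Elementary geometry then closes the loop: for any other $z' \in \tilde{\Pi}_i$, orthogonality of real and imaginary parts gives $|\lambda_{i,d}(t)-z'|^2 = \alpha_{i,d}^2(t) + |{\rm j}\beta_{i,d}(t)-z'|^2 \geq \rho_i^2$; for any $z' \in \Pi_i \setminus \tilde{\Pi}_i$, the definition of $\rho_i$ yields $|z - z'| \geq 2\rho_i$, so the triangle inequality gives $|\lambda_{i,d}(t) - z'| \geq 2\rho_i - \rho_i = \rho_i > 0$.

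The main obstacle is the avoidance argument when $\alpha_{i,d}(t) > 0$: one must verify that the semicircular bypass around the nearest purely imaginary zero does not graze any other transmission zero, whether purely imaginary or strictly right-half-plane. The design of $\rho_i$ in (\ref{eq:radiusofsemicircle}), with its factor $1/2$ and its case-split on $\Pi_i \setminus \tilde{\Pi}_i$, is precisely tailored to make this geometric verification go through; everything else reduces to standard consensus asymptotics.
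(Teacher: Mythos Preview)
Your proposal is correct. The paper itself offers no proof beyond the sentence ``It follows immediately from these definitions the following result,'' so your careful case analysis---consensus for $\beta_i$ via Assumption~\ref{ass:spanningtree}, eventual vanishing of $\alpha_{i,d}$ via the factor $1/2$ in $\rho_i$, and the geometric avoidance argument split on the sign of $\alpha_{i,d}$---is exactly the unpacking that the authors leave to the reader, and it goes through as written.
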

	
	Using the above method, we state the main result of this section.
	\begin{theorem}
	\label{thm:moregeneral}
		Consider the multi-agent system (\ref{eq:statetransision}), (\ref{eq:outputtransision}), (\ref{eq:perturbation}) and the exosystem (\ref{eq:exosystem}),
		and suppose that Assumptions~\ref{ass:apriori}-\ref{ass:Seigen} hold.
		Then for each agent $i\in\mathcal{V}$, the distributed exosystem generator (\ref{eq:Strans}), (\ref{eq:exoest}) and the distributed dynamic compensator (\ref{eq:compensator})
		with (\ref{eq:EFandK}), (\ref{eq:newlambdatrans}), (\ref{eq:betatrans}), (\ref{eq:alphatrans}) solve Problem~\ref{prob:mainproblem}. 
	\end{theorem}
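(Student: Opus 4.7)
\noindent\textbf{Proof Proposal for Theorem~\ref{thm:moregeneral}.}
The plan is to mirror the proof of Theorem~\ref{thm:main}, with the modified $\lambda_i$-dynamics (\ref{eq:newlambdatrans})--(\ref{eq:alphatrans}) replacing (\ref{eq:lambdatrans}) and doing the work of ensuring that the stabilizability hypothesis of Lemma~\ref{lem:stabilizableGH} continues to hold for every $t\geq 0$, \emph{despite} the possible presence of purely imaginary transmission zeros of agent $i$. The first step is to invoke Lemma~\ref{lem:avoidingtranszeros} to conclude that, under Assumption~\ref{ass:spanningtree}, each $\lambda_{i,d}(t)$ both converges to $\lambda_{0,d}$ and stays away from the set $\Pi_i$ of closed right half-plane transmission zeros of agent $i$. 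Concretely, by the construction of $\rho_i$ in (\ref{eq:radiusofsemicircle}) and the semicircular detour in (\ref{eq:alphatrans}), one checks that $\lambda_{i,d}(t)\in\mathbb{C}_+\setminus\Pi_i$ for all $t\geq 0$, so the rank condition (\ref{eq:propertyofestimetedlambda}) of Lemma~\ref{lem:stabilizableGH} holds pointwise in time.

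The second step is to exploit this to synthesize the feedback $K_i(\lambda_i)=[K_{i1}(\lambda_i)\ K_{i2}(\lambda_i)]$ rendering the matrix in (\ref{eq:stablematrix}) stable for every $t\geq 0$, exactly as in Section~\ref{sec:structuredist}; one can take $K_i(\lambda_i)$ to depend continuously on $\lambda_i$ (e.g.\ via a parametric Riccati or pole-placement design) so that $K_i(\lambda_i(t))\to K_i(\lambda_0)$ as $t\to\infty$. Assembling $E_i(\lambda_i)$, $F_i$, $K_i(\lambda_i)$ in the dynamic compensator (\ref{eq:compensator}) via (\ref{eq:EFandK}) then yields, with $\eta_i\coloneqq[x_i^\top\ \xi_i^\top]^\top$, the closed-loop equations (\ref{eq:zetatranspr})--(\ref{eq:ztranspr}) with the same block matrix $M_i(\lambda_i)=M_{i0}(\lambda_i)+\Delta M_i(\lambda_i)$ as in the proof of Theorem~\ref{thm:main}. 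The similarity transformation by $T$ used there still triangularizes $M_{i0}(\lambda_i)$ into the block-diagonal form with diagonal blocks (\ref{eq:stablematrix}) and $A_{i0}-L_iC_{i0}$, so $M_{i0}(\lambda_i(t))$ is stable for all $t\geq 0$, and for sufficiently mild uncertainty (in the sense of Remark~\ref{rem:perturbation}) so is $M_i(\lambda_i(t))$.

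The third step is the asymptotic analysis. Since $\lambda_i(t)\to\lambda_0$, we have $M_i(\lambda_i)\to M_i(\lambda_0)$, and by Assumption~\ref{ass:Seigen} together with stability of $M_i(\lambda_0)$, the pair of Sylvester-type equations (\ref{eq:sylvester}) admits a unique solution $X_i(\lambda_0)$. Setting $\tilde{\eta}_i\coloneqq\eta_i-X_i(\lambda_0)w_0$ and $\tilde{M}_i\coloneqq M_i(\lambda_i)-M_i(\lambda_0)$, the identical manipulations to those in the proof of Theorem~\ref{thm:main} give
\begin{align*}
\dot{\tilde{\eta}}_i=M_i(\lambda_0)\tilde{\eta}_i+\tilde{M}_i\tilde{\eta}_i+\left(\tilde{M}_iX_i(\lambda_0)w_0+\left[\begin{array}{c}0\\ F_iQ_i\end{array}\right](w_i-w_0)\right).
\end{align*}
By Lemma~\ref{lem:Swconverge}, $w_i-w_0\to 0$, and $\tilde{M}_i\to 0$ by the continuity of $M_i(\cdot)$ and Lemma~\ref{lem:avoidingtranszeros}; with $M_i(\lambda_0)$ stable, Lemma~\ref{lem:converge} delivers $\tilde{\eta}_i\to 0$. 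Finally, substituting $\eta_i=\tilde{\eta}_i+X_i(\lambda_0)w_0$ into (\ref{eq:ztranspr}) and using the second equation of (\ref{eq:sylvester}) gives $z_i(t)\to 0$.

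The step I expect to be the main obstacle is verifying that the modified update (\ref{eq:alphatrans}) yields a $\lambda_i(t)$ that is sufficiently well-behaved to meet the hypotheses of Lemma~\ref{lem:converge}: namely that $\lambda_i(t)$ (and hence $K_i(\lambda_i(t))$, $E_i(\lambda_i(t))$, $\tilde{M}_i(t)$) is piecewise continuous and bounded on $[0,\infty)$, and that uniform exponential stability of $\dot{x}=M_i(\lambda_0)x$ (a time-invariant stable system, which is fine) together with $\tilde{M}_i\to 0$ suffices. The switching in (\ref{eq:alphatrans}) between the $\alpha_{i,d}=0$ branch and the semicircular branch introduces non-smooth behaviour in $\alpha_i$ (and thus in $E_i, K_i$), so some care is needed to argue piecewise continuity and boundedness of the resulting coefficients of $G_i(\lambda_i)$ and $K_i(\lambda_i)$; however, boundedness of $\lambda_i$ on $\mathbb{C}_+$ (it lies in a bounded neighborhood of the imaginary axis by (\ref{eq:alphatrans})) together with continuity of the parametric feedback design makes this tractable.
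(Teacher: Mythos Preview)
Your proposal is correct and follows essentially the same approach as the paper: invoke Lemma~\ref{lem:avoidingtranszeros} (together with Lemma~\ref{lem:stabilizableGH}) to guarantee that (\ref{eq:propertyofestimetedlambda}) holds for all $t\geq 0$ so that $K_i(\lambda_i)$ is well-defined, appeal to Lemma~\ref{lem:Swconverge} for $S_i\to S_0$ and $w_i\to w_0$, and then repeat verbatim the argument of Theorem~\ref{thm:main}. The paper's own proof is in fact just these three sentences; your version simply unpacks the ``remainder is identical'' step and adds a (reasonable, though not addressed in the paper) remark about the piecewise continuity and boundedness of $\lambda_i(t)$ needed for Lemma~\ref{lem:converge}.
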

	\begin{proof}
		If Assumption~\ref{ass:spanningtree} holds, by Lemma~\ref{lem:Swconverge} we obtain $\lim_{t\to\infty} (S_i(t)-S_0)=0$ and $\lim_{t\to\infty} (w_i(t)-w_0)=0$.
		From Lemmas~\ref{lem:stabilizableGH}~and~\ref{lem:avoidingtranszeros}, each $\lambda_i$ satisfies the condition (\ref{eq:propertyofestimetedlambda})
		and thus we can define $K_i(\lambda_i)$ for all $t\geq0$. The remainder of the proof is identical to the proof of Theorem~\ref{thm:main}.
	\end{proof}
	\medskip
	\begin{remark}
		\label{remark:reduce2}
		As in Remark~\ref{remark:reduce}, we do not need to use all entries of $\beta_i$ in the equation (\ref{eq:betatrans}).
		For all $i\in\hat{\mathcal{V}}$ write $\beta_i$ in the following form	
		\begin{align}
			\beta_i(t)=\left\{\begin{array}{ll}
				\left[\hat{\beta}_i(t)^\top\ -\hat{\beta}_i(t)^\top\right]^\top, & k\ {\rm is\ an\ even\ number}\\
				\left[\hat{\beta}_i(t)^\top\ -\hat{\beta}_i(t)^\top\ 0\right]^\top, & k\ {\rm is\ an\ odd\ number}
			\end{array}\right.\label{eq:halfbeta}
		\end{align}	
		where $\hat{\beta}_i\in{\rm j}\mathbb{R}^{\lfloor k/2 \rfloor}$.	
		From this form, each agent can make their entire $\beta_i$ after exchanging and updating only $\hat{\beta}_i$.
	\end{remark}
	\begin{remark}
		As in Remark~\ref{rem:perturbation}, the uncertainty parts $\Delta A_i,\Delta B_i,\Delta C_i,\Delta D_i,\Delta P_i,\Delta Q_i$ need not be small.
	\end{remark}

	\section{OUTPUT SYNCHRONIZATION}
\label{sec:synchro}	
	In this section, we extend our approach to solve an {\it output synchronization} problem. Notable results of this problem are reported in \cite{scardovi2009synchronization, wieland2009internal, wieland2011internal, lunze2012synchronization, grip2015synchronization, seyboth2015robust, Zuo2017}. In \cite{scardovi2009synchronization}, the synchronization problem is solved for homogeneous agents. The result of \cite{scardovi2009synchronization} is extended by \cite{wieland2009internal, wieland2011internal} to deal with heterogeneous agents, and further extensions are reported in \cite{lunze2012synchronization, grip2015synchronization, seyboth2015robust, Zuo2017}. However, the approaches in \cite{scardovi2009synchronization, wieland2009internal, wieland2011internal, lunze2012synchronization, grip2015synchronization, seyboth2015robust, Zuo2017} cannot deal with uncertain agent dynamics and moreover assume certain common information available to all agents.	
	We address these issues by extending the approach developed in the previous sections.

	The output synchronization problem differs from the output regulation problem studied previously in that, there is no exosystem (node 0).	
	Consider $N$ heterogeneous agents whose dynamics are given by
	\begin{align}
		&\dot{x}_i=A_ix_i+B_iu_i \label{eq:statetransision_synch}\\
		&y_i=C_ix_i+D_iu_i \label{eq:outputtransision_synch}
	\end{align}
	where $x_i\in\mathbb{R}^{n_i}$ is the state vector, $u_i\in\mathbb{R}^{m_i}$ the control input, $y_i\in\mathbb{R}^{q}$ the output for $i=1,\dots,N$.
	Matrices $A_i,B_i,C_i,D_i$ may have different dimensions and entries and also have uncertainty as in the distributed output regulation problem, namely
	\begin{align}
		&A_i = A_{i0}+\Delta A_i,\ B_i = B_{i0}+\Delta B_i,\nonumber\\
		&C_i = C_{i0}+\Delta C_i,\ D_i = D_{i0}+\Delta D_i\label{eq:perturb_synch}
	\end{align}
	where $A_{i0},B_{i0},C_{i0},D_{i0}$ and $\Delta A_i,\Delta B_i,\Delta C_i,\Delta D_i$ are given in (\ref{eq:perturbation}).
	
	Since there does not exist the exosystem, we represent the time-varying interconnection among the $N$ agents by $\mathcal{G}(t)=(\mathcal{V},\mathcal{E}(t))$. We make the following assumption.

	\begin{assumption}
		\label{ass:underlyingdigraph}
		There is a fixed subset of nodes $\mathcal{V}_r\subseteq \mathcal{V}$ such that the digraph $\mathcal{G}(t)$ uniformly contains a spanning tree with respect to $\mathcal{V}_r$.
	\end{assumption}
	\begin{remark}
		In the distributed output regulation problem, it is necessary that the digraph $\hat{\mathcal{G}}(t)$ uniformly contains a single spanning tree whose root is the exosystem.
		By contrast, in the output synchronization problem, the digraph $\mathcal{G}(t)$ may uniformly contain multiple spanning trees with multiple roots. This is more general, although we require by Assumption~\ref{ass:underlyingdigraph} that these roots be time-invariant.
	\end{remark}	
	
	The output synchronization problem is the following;	
	\begin{problem}[Output Synchronization Problem]
	\label{prob:synchproblem}
		Given a network of agents (\ref{eq:statetransision_synch}), (\ref{eq:outputtransision_synch}) and (\ref{eq:perturb_synch}) with interconnection represented by $\mathcal{G}(t)$,
		design for each agent $i\in\mathcal{V}$ a distributed controller such that
		\begin{align*}
			\lim_{t\to\infty}\left(y_i(t)-y_j(t)\right)=0
		\end{align*}
		for all $i \neq j \in \mathcal{V}$ and all $x_i(0)$, $x_j(0)$.
	\end{problem}
	\medskip
	
	To solve Problem~\ref{prob:synchproblem}, we employ the same controller structure: (i) the distributed exosystem generator	\footnote{It is more appropriate to call this generator a ``distributed synchronizer generator'', as there is no exosystem in the current problem. However, since we use basically the same design as before, we simply inherit the same name.} and (ii) the distributed dynamic compensator.
	
\subsection{Distributed exosystem generator}
	First, we consider the distributed exosystem generator. We define $r\geq1$ as the dimension of the distributed exosystem generator as in Section~\ref{subsec:deg}. A natural choice for $r$ is $r=q$, the dimension of the output of each agent. The more general case where $r$ be different from $q$ can generate more diverse (interesting) synchronized patterns. For example, when $q=1$ and $r=2$, we can make the final synchronized patterns to be constant, ramp, or sinusoidal by choosing suitable second-order exosystem generators. An illustrating example is provided below in Section~\ref{EX:OS1}.
	
	To solve Problem~\ref{prob:synchproblem}, we consider again the distributed exosystem generator:
	\begin{align}
		&\dot{S}_i(t) = \sum_{j=1}^N a_{ij}(t)\left(S_j(t)-S_i(t)\right),\nonumber\\
		&\dot{w}_i(t)=S_i(t)w_i(t)+\sum_{j=1}^N a_{ij}(t)\left(w_j(t)-w_i(t)\right)\label{eq:DEG_synch}
	\end{align}
	where for each $i \in \mathcal{V}_r$, $S_i(0) = S^* \in \mathbb{R}^{r \times r}$ is a fixed matrix; for each $i \in \mathcal{V}\setminus\mathcal{V}_r$, $S_i(0) \in \mathbb{R}^{r \times r}$ is arbitrary; and for each $i \in \mathcal{V}$, $w_i(0) \in \mathbb{R}^r$ is arbitrary. For the fixed $S^*$ we need the following assumption similar to Assumption~\ref{ass:Seigen} (for $S_0$ in the distributed output regulation problem).
	\begin{assumption}
		\label{ass:Seigen_synch}
		The real parts of all eigenvalues of $S^*$ are zeros.
	\end{assumption}
	
	We require that the agents in $\mathcal{V}_r$ have the same initial dynamics $S^*$. This is to derive the following convergence result, as for the distributed exosystem generator (\ref{eq:Strans}) and (\ref{eq:exoest}). This requirement on the initial condition might be stringent, but it already relaxes the requirement in the literature \cite{scardovi2009synchronization, wieland2009internal, wieland2011internal, lunze2012synchronization, grip2015synchronization, seyboth2015robust, Zuo2017} where all agents must have the same dynamics for synchronization.
	
	\begin{lemma}
		\label{lem:Swconverge_synch}
		Consider the distributed exosystem generator (\ref{eq:DEG_synch}).
		If Assumption~\ref{ass:underlyingdigraph} holds, then
		\begin{align*}
			\lim_{t\to\infty} (S_i(t)-S^*)=0,\quad\lim_{t\to\infty} (w_i(t)-w_j(t))=0
		\end{align*}
		for $(\forall i \in \mathcal{V}_r)S_i(0)=S^*$, $(\forall i \in \mathcal{V}\setminus\mathcal{V}_r)S_i(0)$, and $(\forall i,j \in \mathcal{V})w_i(0), w_j(0)$.
	\end{lemma}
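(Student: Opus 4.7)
The plan is to mirror the structure of the proof of Lemma~\ref{lem:Swconverge}, exploiting the fact that $\mathcal{V}_r$ forms a \emph{closed} strongly connected component, so it plays the role of a collective ``virtual exosystem'' (with identical internal dynamics $S^{*}$) for the remaining agents in $\mathcal{V}\setminus\mathcal{V}_r$. I would split the argument into two stages: first the convergence of the $S_i$'s, then the synchronization of the $w_i$'s.

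For the first stage, note that since $\mathcal{V}_r$ is closed, no edges enter $\mathcal{V}_r$ from outside; hence for every $i\in\mathcal{V}_r$ the $S_i$-dynamics in (\ref{eq:DEG_synch}) reduces to $\dot{S}_i(t)=\sum_{j\in\mathcal{V}_r} a_{ij}(t)(S_j(t)-S_i(t))$. With the common initial data $S_i(0)=S^{*}$ for all $i\in\mathcal{V}_r$, uniqueness of solutions to this linear ODE system gives $S_i(t)\equiv S^{*}$ on $[0,\infty)$. With $\{S_j(t)\}_{j\in\mathcal{V}_r}$ now a constant input equal to $S^{*}$, the remaining equation $\dot{S}_i(t)=\sum_{j=1}^N a_{ij}(t)(S_j(t)-S_i(t))$ for $i\in\mathcal{V}\setminus\mathcal{V}_r$ is a standard leader-following consensus problem on a time-varying digraph that uniformly contains a spanning tree rooted at $\mathcal{V}_r$ (Assumption~\ref{ass:underlyingdigraph}). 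Applying the standard time-varying consensus argument used in the proof of Lemma~\ref{lem:Swconverge} (e.g.\ the results of \cite{Cai:arxiv16,LiuHuang:16}) to each entry of $S_i(t)-S^{*}$ yields $S_i(t)\to S^{*}$ as $t\to\infty$ for every $i\in\mathcal{V}$.

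For the second stage, fix any solution $w^{*}(t)$ of $\dot{w}^{*}=S^{*}w^{*}$ (its specific choice will not matter, only that it is a common reference trajectory). First, restrict attention to $i\in\mathcal{V}_r$: since $S_i(t)\equiv S^{*}$ on $\mathcal{V}_r$, the $w_i$-equations form a closed subsystem $\dot{w}_i=S^{*}w_i+\sum_{j\in\mathcal{V}_r} a_{ij}(t)(w_j-w_i)$ on the induced subdigraph, which by Assumption~\ref{ass:underlyingdigraph} is uniformly strongly connected. This is exactly the setting handled in the proof of Lemma~\ref{lem:Swconverge} (applied to the induced closed subnetwork, with $S^{*}$ in place of $S_0$), so we obtain $w_i(t)-w^{*}(t)\to 0$ for all $i\in\mathcal{V}_r$, and in particular $w_i(t)-w_j(t)\to 0$ for $i,j\in\mathcal{V}_r$. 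Then, for $i\in\mathcal{V}\setminus\mathcal{V}_r$, write $\tilde{w}_i:=w_i-w^{*}$ and compute
\begin{align*}
\dot{\tilde{w}}_i = \Bigl(S_i(t)-\sum_{j=1}^N a_{ij}(t)I_r\Bigr)\tilde{w}_i + \sum_{j=1}^N a_{ij}(t)\tilde{w}_j + (S_i(t)-S^{*})w^{*}(t).
\end{align*}
Here $S_i(t)-S^{*}\to 0$ from the first stage, $\tilde{w}_j\to 0$ for $j\in\mathcal{V}_r$ from the previous step, and the reduced (``$A_1$'') part corresponds to a time-varying consensus-with-leaders operator on $\mathcal{V}\setminus\mathcal{V}_r$ whose origin is uniformly exponentially stable thanks to Assumption~\ref{ass:underlyingdigraph}. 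Lemma~\ref{lem:converge} then gives $\tilde{w}_i(t)\to 0$, completing synchronization to the common trajectory $w^{*}(t)$ and hence $w_i(t)-w_j(t)\to 0$ for all $i,j\in\mathcal{V}$.

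The main obstacle I anticipate is the $w_i$-synchronization \emph{within} $\mathcal{V}_r$: the agents there share the autonomous drift $S^{*}$ but have arbitrary initial states $w_i(0)$ and only uniform (not instantaneous) strong connectivity in the induced subdigraph, so the consensus argument must be delicate enough to handle the neutrally-stable drift $S^{*}$ on a time-varying, merely uniformly strongly connected network. The cleanest path is to invoke the very same convergence mechanism already developed for Lemma~\ref{lem:Swconverge} (the adaptive-observer-type analysis in \cite{Cai:arxiv16,LiuHuang:16}); once this is granted, the extension from $\mathcal{V}_r$ to $\mathcal{V}\setminus\mathcal{V}_r$ is a routine cascade argument through Lemma~\ref{lem:converge}.
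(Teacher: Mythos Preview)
Your proposal is essentially correct and follows the same strategy as the paper's proof: both establish $S_i\to S^{*}$ via the block-triangular Laplacian structure induced by the closed component $\mathcal{V}_r$, and both reduce the $w_i$-synchronization to the Scardovi--Sepulchre result for identical drifts plus a vanishing-perturbation argument through Lemma~\ref{lem:converge}. The paper is slightly more economical in that it applies the Scardovi--Sepulchre argument once to the \emph{whole} network with $S^{*}$ substituted for every $S_i$ (obtaining a virtual trajectory $w^{*}$ with $w^{*}(0)$ determined by $\{w_i(0):i\in\mathcal{V}_r\}$), and then handles the discrepancy $S_i(t)-S^{*}$ globally via Lemma~\ref{lem:converge}; you instead split into ``synchronize on $\mathcal{V}_r$ first, then cascade to $\mathcal{V}\setminus\mathcal{V}_r$''. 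Either decomposition works.

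One correction is needed: you cannot ``fix any solution $w^{*}(t)$ of $\dot{w}^{*}=S^{*}w^{*}$'' and then claim $w_i(t)-w^{*}(t)\to 0$ on $\mathcal{V}_r$. The $w_i$'s on $\mathcal{V}_r$ synchronize to one \emph{specific} trajectory of $\dot{w}^{*}=S^{*}w^{*}$, determined by the initial data $\{w_i(0):i\in\mathcal{V}_r\}$; for any other choice of $w^{*}(0)$ the claimed convergence is false. Replace ``fix any'' by ``there exists'' (this is exactly what the Scardovi--Sepulchre lemma provides, and what the paper invokes). Relatedly, your appeal to ``the proof of Lemma~\ref{lem:Swconverge}'' for the leaderless $\mathcal{V}_r$ subsystem is slightly off, since Lemma~\ref{lem:Swconverge} assumes a fixed root $w_0$; the right reference is the synchronization result of \cite{scardovi2009synchronization} for uniformly connected graphs, which is what Lemma~\ref{lem:Swconverge}'s proof itself relies on.
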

	The proof is in Appendix.
	
\subsection{Distributed dynamic compensator}
	\label{sec:DDC_synch}
	As in the output regulation problem, we solve the output synchronization problem by reducing the error between the output $y_i \in \mathbb{R}^q$ of each agent and the exogeneous signal $w_i\in\mathbb{R}^r$.
	
	When $q=r$, the error is simply $e_i = y_i-w_i$. Since we consider the more general case where $q$ need not be equal to $r$, we define the error to be 
	\begin{align}
		e_i = y_i + Q_i w_i.\label{eq:error_synch}
	\end{align}
	Here $Q_i \in \mathbb{R}^{q \times r}$ and may be different for different agents. 
	Thus for output synchronization, it is important that $Q_i$, $w_i$ converge to the same vector for all agents. Since $w_i$ do so by Lemma~\ref{lem:Swconverge_synch}, we propose the following consensus update for $Q_i$:
	\begin{align}
		\dot{Q}_i=\sum_{j=1}^N a_{ij}(t)\left(Q_j(t)-Q_i(t)\right),\ Q_i(0)\in\mathbb{R}^{q\times r}\label{eq:Q_consensus_synch}
	\end{align}
	for all $i,j\in\mathcal{V}$.
	\begin{lemma}
		\label{lem:reference_synch}
		Consider the distributed exosystem generator (\ref{eq:DEG_synch}) and (\ref{eq:Q_consensus_synch}).
		If Assumption~\ref{ass:underlyingdigraph} holds, then
		\begin{align*}
			\lim_{t\to\infty} (Q_i(t)w_i(t)-Q_j(t)w_j(t))=0
		\end{align*}
		for all $i,j\in\mathcal{V}$ and all $Q_i(0)$, $Q_j(0)$, $w_i(0)$, $w_j(0)$.
	\end{lemma}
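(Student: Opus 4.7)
The plan is to decompose the quantity $Q_i(t)w_i(t)-Q_j(t)w_j(t)$ into two summands, each of the form ``bounded'' times ``vanishing''. The two ingredients I will need are (a) consensus of the $Q_i$'s, and (b) consensus together with boundedness of the $w_i$'s.

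First, I treat the $Q$-update. Equation~(\ref{eq:Q_consensus_synch}) is a standard linear consensus protocol applied entrywise to the matrices $Q_i$, driven by the same time-varying weights $a_{ij}(t)$ used in~(\ref{eq:DEG_synch}); these weights are piecewise continuous, bounded, and bounded below by $\epsilon>0$ whenever the corresponding edge is active. Under Assumption~\ref{ass:underlyingdigraph}, the digraph $\mathcal{G}(t)$ uniformly contains a spanning tree, which is precisely the standard sufficient condition for consensus on time-varying directed networks (cf.\ e.g.\ \cite{RenBeard:08}). Hence there exists a matrix $Q_\infty\in\mathbb{R}^{q\times r}$ such that $Q_i(t)\to Q_\infty$ as $t\to\infty$ for every $i\in\mathcal{V}$; in particular $Q_i(t)-Q_j(t)\to 0$, and each trajectory $Q_i(\cdot)$ is bounded, being a convex combination of the initial matrices $Q_1(0),\dots,Q_N(0)$.

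Second, Lemma~\ref{lem:Swconverge_synch} already provides $w_i(t)-w_j(t)\to 0$. For the boundedness of each $w_i(t)$, I note that $S_i(t)\to S^{*}$ and $\sum_{j}a_{ij}(t)(w_j(t)-w_i(t))\to 0$, so $w_i$ obeys a linear time-varying ODE whose limiting system is $\dot{w}=S^{*}w$ with $S^{*}$ having purely imaginary eigenvalues by Assumption~\ref{ass:Seigen_synch}. A perturbation argument in the spirit of Lemma~\ref{lem:converge} (applied to the difference between $w_i$ and a solution of the limit system with matching initial condition, together with the fact that solutions of $\dot{w}=S^{*}w$ are bounded under the standard non-degeneracy tacitly used throughout the paper) transfers boundedness from the limit system to the actual trajectory. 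With both ingredients in hand, the identity
\begin{align*}
Q_i(t)w_i(t)-Q_j(t)w_j(t)=Q_i(t)\bigl(w_i(t)-w_j(t)\bigr)+\bigl(Q_i(t)-Q_j(t)\bigr)w_j(t)
\end{align*}
expresses the difference as the sum of a bounded matrix times a vanishing vector and a vanishing matrix times a bounded vector; both summands tend to zero, which yields the claim.

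The step I expect to be the main obstacle is the boundedness of $w_i(t)$ under the full coupled dynamics~(\ref{eq:DEG_synch}): because the product $S_i(t)w_i(t)$ is bilinear in two converging quantities, one must rule out transient blow-up before the ``limit'' behavior kicks in. Once boundedness is secured (as above, via a Lemma~\ref{lem:converge}-style perturbation argument), the $Q$-consensus statement and the two-term decomposition above finish the proof immediately.
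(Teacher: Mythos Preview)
Your argument is correct and follows essentially the same route as the paper's proof: consensus of the $Q_i$'s, Lemma~\ref{lem:Swconverge_synch} for $w_i-w_j\to 0$, and an algebraic split into ``bounded times vanishing'' pieces (the paper uses the three-term split $(Q_i-Q^*)w_i-(Q_j-Q^*)w_j+Q^*(w_i-w_j)$ with the consensus value $Q^*$, rather than your two-term version, but the idea is identical). You are in fact more careful than the paper in explicitly flagging the need for boundedness of $w_i(t)$, which the paper's proof uses implicitly without comment.
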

	\begin{proof}
		From \cite[Theorem~1]{moreau2004stability} and (\ref{eq:Q_consensus_synch}), $Q_i(i\in\mathcal{V})$ reach consensus for all $Q_i(0)$. Let $Q^*$ be the consensus value of $Q_i(t)$.
		Then
		\begin{align*}
			Q_iw_i&-Q_jw_j\\
			&=(Q_i-Q^*)w_i+Q^*w_i-(Q_j-Q^*)w_j-Q^*w_j\\
			&=(Q_i-Q^*)w_i-(Q_j-Q^*)w_j+Q^*(w_i-w_j)
		\end{align*}
		Since $(w_i-w_j)\rightarrow0$ for all $w_i(0)$, $w_j(0)$ from Lemma~\ref{lem:Swconverge_synch} and $(Q_i-Q^*)\rightarrow0$, $(Q_j-Q^*)\rightarrow0$, we ensure $Q_i(t)w_i(t)-Q_j(t)w_j(t)\rightarrow 0$ as $t\rightarrow\infty$.		
	\end{proof}
	
	We consider again the dynamic compensator
	\begin{align}
		&\dot{\xi}_i=E_i(t)\xi_i+F_ie_i(t)\nonumber\\
		&u_i=K_i(t)\xi_i\label{eq:compensator_synch}
	\end{align}
	where $\xi_i$ is the state of the dynamic compensator and $e_i$ is defined in (\ref{eq:error_synch}).
	The matrices $E_i(t)$, $F_i$, and $K_i(t)$ are as specified in (\ref{eq:EFandK}), (\ref{eq:newlambdatrans}), (\ref{eq:betatrans}) and (\ref{eq:alphatrans}). As in Section~\ref{sec:moregeneral}, such a dynamic compensator can deal with purely imaginary transmission zeros. Note that in (\ref{eq:betatrans}), the initial values of $\beta_i$ are	
	\begin{align*}
		&(\forall i \in\mathcal{V}_r)\ \beta_i(0) = [{\rm Im}\{\lambda_1(S^*)\}\cdots{\rm Im}\{\lambda_r(S^*)\}]^\top\\
		&(\forall i \in\mathcal{V}\setminus\mathcal{V}_r)\ \beta_i(0) \in\mathbb{R}^{r}.
	\end{align*}
	
	In the next subsection, we present the result of the output synchronization problem.
	
\subsection{Result}
	Our result for the output synchronization problem is the following.
	\begin{theorem}
	\label{thm:synchro}
		Consider the multi-agent system (\ref{eq:statetransision_synch}), (\ref{eq:outputtransision_synch}), (\ref{eq:perturb_synch}), and suppose that Assumptions~\ref{ass:stabilizable}, \ref{ass:detectable}, \ref{ass:transmissionzeros}, \ref{ass:underlyingdigraph}, \ref{ass:Seigen_synch} hold.
		Then for each agent $i\in\mathcal{V}$, the distributed exosystem generator (\ref{eq:DEG_synch}), and the distributed dynamic compensator (\ref{eq:compensator_synch}) with (\ref{eq:EFandK}), (\ref{eq:newlambdatrans}), (\ref{eq:betatrans}), (\ref{eq:alphatrans}) and (\ref{eq:Q_consensus_synch}) solve Problem~\ref{prob:synchproblem}.
	\end{theorem}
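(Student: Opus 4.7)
The plan is to adapt the proof of Theorem~\ref{thm:main} (and the imaginary-zeros refinement of Theorem~\ref{thm:moregeneral}) to the output-synchronization setting. In the absence of an exosystem, the roles of $(S_0, w_0, Q_i)$ are played by the asymptotic consensus values $(S^*, w^*, Q^*)$, where $w^*(t)$ is a common trajectory generated by $\dot{w}^* = S^* w^*$ (whose existence and boundedness follow from (\ref{eq:DEG_synch}) and Assumption~\ref{ass:Seigen_synch}) and $Q^*$ is the limit of $Q_i(t)$ under (\ref{eq:Q_consensus_synch}). The goal is to show that each error $e_i = y_i + Q_i w_i$ converges to zero via an internal-model argument, and then to invoke Lemma~\ref{lem:reference_synch} to transfer this into pairwise output synchronization $y_i - y_j \to 0$.

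First I would collect the convergence facts. Under Assumption~\ref{ass:underlyingdigraph}, Lemma~\ref{lem:Swconverge_synch} yields $S_i(t) \to S^*$ and $w_i(t) - w_j(t) \to 0$ for all $i, j \in \mathcal{V}$, and a standard consensus argument applied to (\ref{eq:DEG_synch}) extracts a common trajectory $w^*(t)$ satisfying $\dot{w}^* = S^* w^*$ with $w_i(t) - w^*(t) \to 0$. Lemma~\ref{lem:reference_synch} gives $Q_i(t) w_i(t) - Q_j(t) w_j(t) \to 0$, and (\ref{eq:Q_consensus_synch}) gives $Q_i \to Q^*$ for some common $Q^*$. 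With $\beta_i(0)$ for $i \in \mathcal{V}_r$ chosen as the imaginary parts of $\sigma(S^*)$, the update (\ref{eq:newlambdatrans})--(\ref{eq:alphatrans}) drives $\lambda_{i,d}(t) \to \lambda^*_d \in \sigma(S^*)$ while always satisfying the rank condition (\ref{eq:propertyofestimetedlambda}); Lemma~\ref{lem:stabilizableGH} then allows $K_i(\lambda_i)$ in (\ref{eq:EFandK}) to be synthesized so that the matrix (\ref{eq:stablematrix}) is stable for all $t \geq 0$, and the similarity argument from the proof of Theorem~\ref{thm:main} shows that the closed-loop matrix $M_i(\lambda_i)$ in (\ref{eq:matM}) remains stable for all $t \geq 0$ under admissible uncertainties.

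The core step mirrors the Sylvester construction in Theorem~\ref{thm:main}. With $\eta_i = [x_i^\top\ \xi_i^\top]^\top$ the closed-loop equations become
\begin{align*}
\dot{\eta}_i &= M_i(\lambda_i)\eta_i + \left[\begin{array}{c} 0 \\ F_i Q_i \end{array}\right] w_i, \\
y_i &= [C_i\ D_i K_i(\lambda_i)]\eta_i,
\end{align*}
where the $P_i w_0$ disturbance term of Theorem~\ref{thm:main} is absent. Since $\sigma(S^*) \cap \sigma(M_i(\lambda^*)) = \emptyset$, there exists a unique $X_i^*$ solving
\begin{align*}
X_i^* S^* &= M_i(\lambda^*) X_i^* + \left[\begin{array}{c} 0 \\ F_i Q^* \end{array}\right], \\
0 &= [C_i\ D_i K_i(\lambda^*)] X_i^* + Q^*,
\end{align*}
where the second (output-matching) equation is solvable precisely because $(G_i(\lambda^*), H_i)$ forms a $q_i$-copy internal model of $S^*$. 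Setting $\tilde{\eta}_i = \eta_i - X_i^* w^*$ and differentiating, the $\tilde{\eta}_i$ dynamics decompose into a stable nominal part $M_i(\lambda^*) \tilde{\eta}_i$ plus perturbation terms driven by $\lambda_i - \lambda^*$, $Q_i - Q^*$, and $w_i - w^*$, all of which vanish as $t \to \infty$. Lemma~\ref{lem:converge} then yields $\tilde{\eta}_i \to 0$, hence $e_i = y_i + Q_i w_i \to 0$; combined with Lemma~\ref{lem:reference_synch} this gives $y_i - y_j = (e_i - e_j) - (Q_i w_i - Q_j w_j) \to 0$.

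The main obstacle, compared with Theorem~\ref{thm:main}, is that $Q^*$ and $w^*$ are emergent consensus values rather than prescribed external signals. One must justify the existence and boundedness of the common trajectory $w^*$ solving $\dot{w}^* = S^* w^*$ from (\ref{eq:DEG_synch}), and verify that the time-varying residuals $(Q_i - Q^*) w_i$ and $Q^*(w_i - w^*)$ enter the $\tilde{\eta}_i$ dynamics only through an additive term that vanishes uniformly and is therefore amenable to Lemma~\ref{lem:converge}. Once these technicalities are handled, the internal-model argument of Theorem~\ref{thm:main} carries over with only notational modifications.
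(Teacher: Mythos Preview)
Your proposal is correct and follows essentially the same route as the paper: invoke Lemma~\ref{lem:avoidingtranszeros} for $\lambda_i \to \sigma(S^*)$, Lemma~\ref{lem:reference_synch} for $Q_i w_i - Q_j w_j \to 0$, and then rerun the Sylvester/perturbation argument of Theorem~\ref{thm:main} with $P_i = 0$ to obtain $e_i \to 0$ and hence $y_i - y_j \to 0$. The paper compresses the core step into the single phrase ``in the same way as in the proof of Theorem~\ref{thm:main} with $P_i=0$,'' whereas you spell out explicitly that the Sylvester equation must be posed at the consensus limits $(S^*, Q^*, \lambda^*)$ and that the time-varying discrepancies $Q_i - Q^*$ and $w_i - w^*$ enter only as vanishing additive perturbations handled by Lemma~\ref{lem:converge}; this extra care is warranted and fills a detail the paper leaves implicit.
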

	Theorem~\ref{thm:synchro} improves the results of \cite{wieland2011internal} in the following aspects. (i) While \cite{wieland2011internal} requires $S^*$ and $Q^*$ to be known as common information by all agents, we allow $Q_i (i \in \mathcal{V})$ and $S_i (i \in \mathcal{V}\setminus\mathcal{V}_r)$ to be different. This makes our solution more suitable for distributed implementation. Only in the special case when $\mathcal{V}_r = \mathcal{V}$ does our requirement on $S_i$ become the same as \cite{wieland2011internal}. (ii) While \cite{wieland2011internal} cannot deal with uncertain agent dynamics, we address uncertainty by the ($q$-copy) internal model principle.
	
	{\noindent\hspace{2em}{\itshape Proof of Theorem~\ref{thm:synchro}: }}
		From Lemma~\ref{lem:avoidingtranszeros} with (\ref{eq:newlambdatrans}), (\ref{eq:betatrans}), (\ref{eq:alphatrans}), for all $i\in\mathcal{V}$, $\lambda_i(t)$ achieve consensus while avoiding transmission zeros of agent $i$ and the consensus value is the vector of eigenvalues of $S^*$.
		From Lemma~\ref{lem:reference_synch}, we have $Q_iw_i\rightarrow w_{\rm ref}^*\in\mathbb{R}^r,i\in\mathcal{V}$ where $w_{\rm ref}^*$ is some constant vector.
		In the same way as in the proof of Theorem~\ref{thm:main} with $P_i=0$, we obtain $e_i\rightarrow0$ and
		\begin{align*}
			y_i = e_i - Q_iw_i\rightarrow -w_{\rm ref}^*, i\in\mathcal{V}.
		\end{align*}
		Therefore $(y_i(t)-y_j(t))\rightarrow0$.		
	{\hspace*{\fill}~\QED\par\endtrivlist\unskip}
	
	Note that as in Remark~\ref{rem:perturbation}, the uncertainty parts $\Delta A_i,\Delta B_i,\Delta C_i,\Delta D_i$ need not be small, and only need to satisfy that the matrix $M_i(\lambda_i)$ in (\ref{eq:matM}) is stable for all $t\geq0$.

	\section{SIMULATION EXAMPLES}
\label{sec:example}
	In this section, we illustrate the designed distributed controller by applying it to solve distributed output regulation problems, as well as an
	output synchronization problem.
\subsection{Distributed Output Regulation Problem}
\subsubsection{Example in Section \ref{sec:CompEx}, continued}
	\label{EX:OR1}
	Consider the time-varying network as displayed in Fig.~\ref{fig:topologyex}. The network periodically switches between $\hat{\mathcal{G}}_1$ and $\hat{\mathcal{G}}_2$ every 10 seconds.
	Thus this network uniformly contains a spanning tree and the root is node~0 (indeed the network in Fig.~\ref{fig:ex_graph_2015}). Therefore Assumption~\ref{ass:spanningtree} holds.	
	\begin{figure}[t]
		\centering
		\includegraphics[scale=1.0]{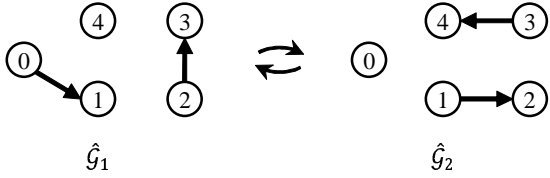}
		\caption{Time-varying network (Example \ref{EX:OR1})}
		\label{fig:topologyex}
	\end{figure}

	Also consider the exosystem and 4 agents with exactly the same parameters as Section \ref{sec:CompEx}.
	Then $\beta_0$ in (\ref{eq:betatrans}) is $\beta_0=[2\ -2]^\top$. It is checked that $(A_{i0},B_{i0})$ and $(C_i,A_{i0})$ are controllable and observable, respectively, and therefore stabilizable and detectable, respectively; thus {Assumptions~\ref{ass:stabilizable},\ref{ass:detectable} hold.
	Then we choose $L_i$ such that the eigenvalues of $A_{i0}-L_iC_i$ are $\{-1,-2,-3\}$.
	
	The transmission zeros of the agents are 
	\begin{align}
		\Pi_i=\tilde{\Pi}_i=\{\pm(0.5+0.1i){\rm j}\}\label{eq:exampletrz}
	\end{align}
	Therefore, all the transmission zeros are on the imaginary axis and Assumption~\ref{ass:imaginaryaxis} does not hold.

	We choose $w_0(0)$ uniformly at random from the interval $[-1, 1]$.
	We apply the distributed exosystem generator (\ref{eq:Strans}) and (\ref{eq:exoest}) with the initial conditions $w_i(0)$ selected uniformly
	at random from the interval $[-1,1]$, and set
	\begin{align*}
		S_i(0)=\left[\begin{array}{cc}0 & 0.5i\\ -0.5i & 0\end{array}\right].
	\end{align*}
	We also apply the distributed dynamic compensator (\ref{eq:compensator}), (\ref{eq:EFandK}), (\ref{eq:newlambdatrans}),
	(\ref{eq:betatrans}), (\ref{eq:alphatrans}) with initial conditions $x_i(0)$ and $\xi_i(0)$ selected uniformly at random from the interval $[-1,1]$, and set $\beta_i(0)=0$ for all $i\in\mathcal{V}$.
	From (\ref{eq:exampletrz}), $\rho_i$ and $\gamma_{i,d}(t)$ in (\ref{eq:alphatrans}) are 
	\begin{align*}
		&\rho_i =(2-(0.5+0.1i))/2\\
		&\gamma_{i,1}(t)=|0.5+0.1i-\beta_{i,1}(t)|\\
		&\gamma_{i,2}(t)=|-(0.5+0.1i)-\beta_{i,2}(t)|.
	\end{align*}
	
	\begin{figure}[t]
		\centering
		\includegraphics[scale=1.0]{REGULATEDRESULT_state.eps}
		\caption{Simulation result (distributed output regulation problem with 1 output)}
		\label{fig:result}
	\end{figure}
	\begin{figure}[!t]
		\centering
		\includegraphics[scale=1.0]{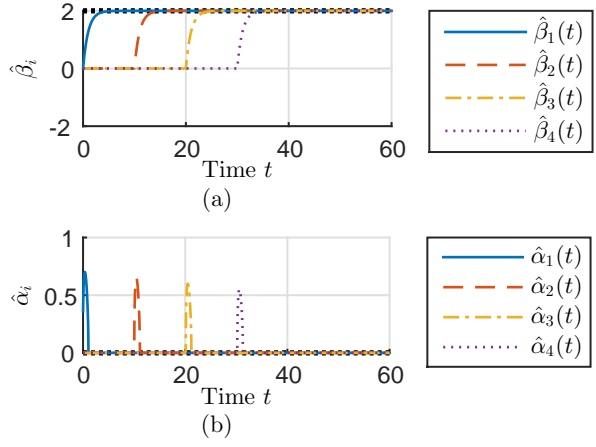}
		\caption{Trajectories of real and imaginary parts of local esitimete $\hat{\lambda}_i$ ($\hat{\beta}_0(t) = 2, \hat{\alpha}_0(t) = 0$)}
		\label{fig:resulteigen}
	\end{figure}
	The simulation result is displayed in Fig.~\ref{fig:result}.
	In Fig.~\ref{fig:result}(a), the dotted curve represents the first element of exosystem's signal $w_{0,1}$ and others represent the estimated exogeneous signals $w_{i,1}, i=1,2,3,4$.
	Observe that all $w_{i,1}$ synchronize with $w_{0,1}$. Thus the distributed exosystem generators effectively
	create a local copy of the exosystem, despite that not all agents have access to the exosystem and the network is time-varying.
	
	Fig.~\ref{fig:result}(b) shows the regulated output $z_i$ of each agent (in this simulation, $z_i=x_{i,1}-w_{0,1}$). Observe that all $z_i$ converge to 0.
	This demonstrates the effectiveness of the distributed dynamic compensators for achieving perfect regulation, despite of the parameter perturbation and initially imprecise internal model of the exosystem.
	
	We next examine the parameters in the distributed dynamic compensator. Define $\hat{\alpha}_i$ as the real part of local estimate $\lambda_i$ made in the same as (\ref{eq:alphatrans}) with $\hat{\beta}_i$ in (\ref{eq:halfbeta}).
	In this example, $\hat{\alpha}_i,\hat{\beta}_i\in\mathbb{R}$, $\alpha_i = [\hat{\alpha}_i\ \hat{\alpha}_i]^\top\in\mathbb{R}^2$ and $\beta_i = [\hat{\beta}_i\ -\hat{\beta}_i]^\top\in\mathbb{R}^2$.
	Fig.~\ref{fig:resulteigen}(a) and (b) show $\hat{\beta}_i$ and $\hat{\alpha}_i$, respectively. Each $\hat{\beta}_i$ converges to $\hat{\beta}_0$, and each $\hat{\alpha}_i$ becomes positive exactly
	when the distance between $\hat{\beta}_i$ and the closest transmission zeros to $\hat{\beta}_i$, namely $\gamma_{i,d}(t)$, is less than $\rho_i$.
		
	\begin{figure}[t]
		\centering
		\includegraphics[scale=1.0]{REGULATEDRESULT_K1_all.eps}
		\caption{Trajectories of $K_1=[k_{1,1}^1\ k_{1,2}^1\ k_{1,3}^1\ k_{1,4}^1\ k_{1,5}^1]$}
		\label{fig:resultK1}
	\end{figure}
	\begin{figure}[!t]
		\centering
		\includegraphics[scale=1.0]{REGULATEDRESULT_K2_all.eps}
		\caption{Trajectories of $K_2=[k_{1,1}^2\ k_{1,2}^2\ k_{1,3}^2\ k_{1,4}^2\ k_{1,5}^2]$}
		\label{fig:resultK2}
	\end{figure}
	\begin{figure}[t]
		\centering
		\includegraphics[scale=1.0]{REGULATEDRESULT_K3_all.eps}
		\caption{Trajectories of $K_3=[k_{1,1}^3\ k_{1,2}^3\ k_{1,3}^3\ k_{1,4}^3\ k_{1,5}^3]$}
		\label{fig:resultK3}
	\end{figure}
	\begin{figure}[!t]
		\centering
		\includegraphics[scale=1.0]{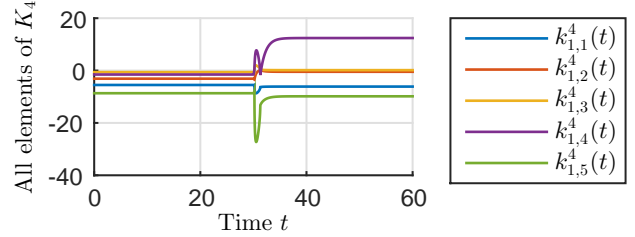}
		\caption{Trajectories of $K_4=[k_{1,1}^4\ k_{1,2}^4\ k_{1,3}^4\ k_{1,4}^4\ k_{1,5}^4]$}
		\label{fig:resultK4}
	\end{figure}
	The internal model $G_i(t)$'s entries contain $\alpha_i(t)$ and $\beta_i(t)$. In this example, $G_i(t)$ is in the form
	\begin{align*}
		G_i(t)=\left[\begin{array}{cc}0 & 1\\ -\left(\hat{\alpha}_i(t)^2+\hat{\beta}_i(t)^2\right) & 2\hat{\alpha}_i(t)\end{array}\right],
	\end{align*}
	and we choose the matrix $K_i(t)$ such that the eigenvalues of (\ref{eq:stablematrix}) are $\left\{-0.4, -0.8, -1.2, -1.6, -2.0\right\}$.

	Figs.~{\ref{fig:resultK1}~-~\ref{fig:resultK4}} show the trajectories of all elements of $K_i(t)=[k^i_{1,1}(t)\ \cdots\ k^i_{1,5}(t)],i=1,\dots,4$ in this example, respectively.
	Observe that each entry of $K_i$ changes significantly exactly when $\alpha_i$ becomes positive (i.e. avoiding transmission zeros).
	
\subsubsection{Output regulation of two-dimensional outputs over a large scale network}
	\label{EX:OR2}
	Consider the large scale time-varying network as displayed in Fig.~\ref{fig:treelarge}. The network periodically switches among $\hat{\mathcal{G}}_1$, $\hat{\mathcal{G}}_2$ and $\hat{\mathcal{G}}_3$ every 2, 3 and 5 seconds, respectively.
	Thus this network uniformly contains a spanning tree and the root is node~0. Therefore Assumption~\ref{ass:spanningtree} holds.
	
	The exosystem (node~0) is
	\begin{align*}
		\dot{w}_0(t)=S_0w_0,\quad S_0=\left[\begin{array}{cc}0 & 1\\ -1 & 0\end{array}\right].
	\end{align*}
	
	Excluding the exosystem the network contains 155 agents and they are classified into five types:
	\begin{align*}
		&\dot{x}_i=A_ix_i+B_iu_i+P_iw_0\\
		&z_i=C_ix_i+D_iu_i+Q_iw_0
	\end{align*}
	where
	\begin{align*}
		&A_i = A_{i0}+\Delta A,\ B_i = B_{i0}+\Delta B,\\
		&A_{i0}=\left[\begin{array}{cc}0 & 1\\ m_i & 2\end{array}\right],\ \Delta A = \left[\begin{array}{cc}0 & 0\\ 0 & -0.1\end{array}\right],\\
		&B_{i0}=\left[\begin{array}{cc}1 & 0\\ (0.1m_i+0.2)^2+m_i+1 & 1\end{array}\right],\\
		&\Delta B = \left[\begin{array}{cc}0 & 0\\ -0.1 & 0\end{array}\right],\ C_i=\left[\begin{array}{cc}1 & 0\\ 0 & 1\end{array}\right],\ D_i=\left[\begin{array}{cc}1 & 0\\ 0 & 1\end{array}\right],\\
		&P_i=\left[\begin{array}{cc}0 & 1\\ 1 & 0\end{array}\right],\ Q_i=\left[\begin{array}{cc}-1 & 0\\ 0 & -1\end{array}\right],
	\end{align*}
	and $m_i = 1$ for $i=1,6,11,\dots$, $m_i = 2$ for $i=2,7,12,\dots$, $m_i = 3$ for $i=3,8,13,\dots$, $m_i = 4$ for $i=4,9,14,\dots$ and $m_i = 5$ for $i=5,10,15,\dots$.
	It is checked that $(A_{i0},B_{i0})$ and $(C_i,A_{i0})$ are controllable and observable, respectively, and therefore stabilizable and detectable, respectively; thus Assumptions~\ref{ass:stabilizable}, \ref{ass:detectable} hold.
	Then we choose $L_i$ such that the eigenvalues of $A_{i0}-L_iC_i$ are $\{-1.20,-1.21\}$.	
	
	The transmission zeros of the agents are 
	\begin{align}
		\Pi_i=\tilde{\Pi}_i=\{\pm(0.1m_i+0.2){\rm j}\}
	\end{align}
	Therefore, all the transmission zeros are on the imaginary axis and Assumption~\ref{ass:imaginaryaxis} does not hold.	

	We choose $w_0(0)$ uniformly at random from the interval $[-1, 1]$.
	We apply the distributed exosystem generator (\ref{eq:Strans}) and (\ref{eq:exoest}) with the initial conditions $w_i(0)$ selected uniformly
	at random from the interval $[-1,1]$, and set
	\begin{align*}
		S_i(0)=0_{2\times2}.
	\end{align*}
	\begin{figure}[t]
		\centering
		\includegraphics[scale=1.0]{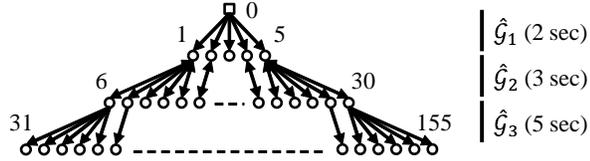}
		\caption{Time-varying network of $155$ agents (Example \ref{EX:OR2})}
		\label{fig:treelarge}
	\end{figure}
	\begin{figure}[t]
		\centering
		\includegraphics[scale=1.0]{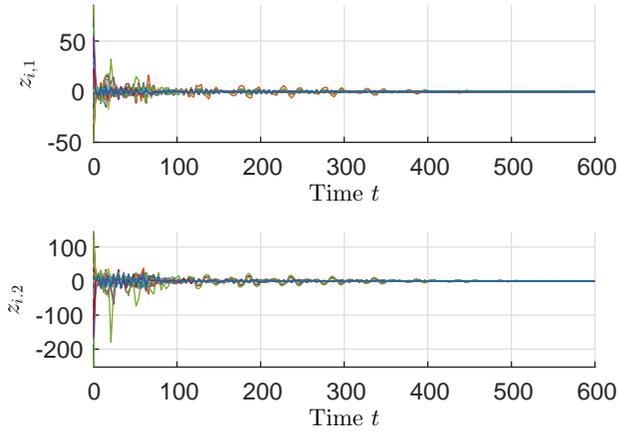}
		\caption{Simulation result (large scale network ($i=1,\dots,155$))}
		\label{fig:errorlarge}
	\end{figure}
	We also apply the distributed dynamic compensator (\ref{eq:compensator}), (\ref{eq:EFandK}), (\ref{eq:newlambdatrans}),
	(\ref{eq:betatrans}), (\ref{eq:alphatrans}) with initial conditions $x_i(0), \xi_i(0)$ selected uniformly at random from the interval $[-1,1]$ and $\beta_i(0) = 0$ for all $i\in\mathcal{V}$.
	We choose the matrix $K_i(t)$ such that the eigenvalues of (\ref{eq:stablematrix}) are $\left\{-0.70,-0.71,-0.72,-0.73,-0.74,-0.75\right\}$.

	The simulation result is displayed in Fig.~\ref{fig:errorlarge}. This figure shows the regulated output $z_{i,1}, z_{i,2}$ of each agent (in this simulation, $z_{i,1}=x_{i,1}-w_{0,1}$ and $z_{i,2}=x_{i,2}-w_{0,2}$).
	Observe that all $z_{i,1}$ and $z_{i,2}$ converge to 0 for all $i\in\mathcal{V}$. This demonstrates the effectiveness of $q_i$-copy internal model for robust regulation of higher dimensional outputs over large scale networks.

\subsection{Distributed Output Synchronization Problem}
	\label{EX:OS1}
	\begin{figure}[t]
		\centering
		\includegraphics[scale=1.0]{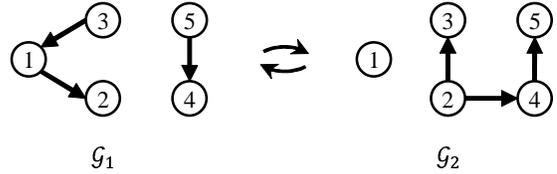}
		\caption{Time-varying network (Example \ref{EX:OS1})}
		\label{fig:topologyex2}
	\end{figure}
	Consider the time-varying network in Fig.~\ref{fig:topologyex2}.
	The network periodically switches between $\mathcal{G}_1$  and $\mathcal{G}_2$ every 3 seconds.
	Thus this network uniformly contains a spanning tree with respect to $\mathcal{V}_r =\{1,2,3\}$. 
	Therefore Assumption~\ref{ass:underlyingdigraph} holds.
	
	The agents ($i=1,\dots,5$) are
	\begin{align*}
		&\dot{x}_i=A_ix_i+B_iu_i\\
		&y_i=C_ix_i+D_iu_i
	\end{align*}
	where
	\begin{align*}
		&A_i = A_{i0}+\Delta A,\ B_i = B_{i0}+\Delta B,\\
		&A_{i0}=\left[\begin{array}{cc}0 & i\\ -i & 0 \end{array}\right],\
		B_{i0}=\left[0\ -0.99i\right]^\top,\\
		&\Delta A = \left[\begin{array}{cc}0.1 & 0\\ 0 & 0 \end{array}\right],\
		\Delta B = \left[\begin{array}{c}0.1 \\ 0\end{array}\right],\
		C_i=\left[1\ 0\right],\
		D_i=1.
	\end{align*}
	It is checked that $(A_{i0},B_{i0})$ and $(C_i,A_{i0})$ are controllable and observable, respectively, and therefore stabilizable and detectable, respectively; thus {Assumptions~\ref{ass:stabilizable},\ref{ass:detectable} hold.
	Then we choose $L_i$ such that the eigenvalues of $A_{i0}-L_iC_i$ are $\{-0.7,-0.8\}$.
	
	The transmission zeros of the agents are 
	\begin{align*}
		\Pi_i=\tilde{\Pi}_i=\{\pm0.1i{\rm j}\}
	\end{align*}
	Therefore all the transmission zeros are on the imaginary axis and, set $\rho_i=(1-0.1i)/2$.
	
	Although the output of each agent is one dimensional, i.e. $q=1$, we define the dimension of the distributed exosystem generator as $r=2$ and for the agent $i\in\mathcal{V}_r$, set
	\begin{align*}
		&S_i(0) = \left[\begin{array}{cc}0 & 1\\ -1 & 0 \end{array}\right]\\
		&\hat{\beta}_i(0) = 1.
	\end{align*}
	Then we set $S_i(0)=0_{2\times2}$ $(i\in\mathcal{V}\setminus\mathcal{V}_r)$ and choose $\hat{\beta}_i(0)(i \in\mathcal{V}\setminus\mathcal{V}_r)$ uniformly at random from the interval $[-1,1]$ and $w_i(0)$, $x_i(0)$, $\xi_i(0)$ ($i\in\mathcal{V}$) uniformly at random from the interval $[-1,1]$, and set $Q_i(0)=\left[-i\ 0\right]$ for all $i\in\mathcal{V}$.
	Then we apply the distributed exosystem generator (\ref{eq:DEG_synch}) and the distributed dynamic compensator (\ref{eq:compensator_synch}), (\ref{eq:Q_consensus_synch}).
	We choose the matrix $K_i(t)$ such that the eigenvalues of (\ref{eq:stablematrix}) are $\left\{-0.7, -0.8, -0.9, -1.0\right\}$.

	The simulation result is displayed in Fig.~\ref{fig:result_synch2}.
	This figure shows the outputs $y_i$ of all agents. Observe that all outputs synchronize.
	This example illustrates the effectiveness of our proposed controller for achieving robust output synchronization.
	\begin{figure}[t]
		\centering
		\includegraphics[scale=1.0]{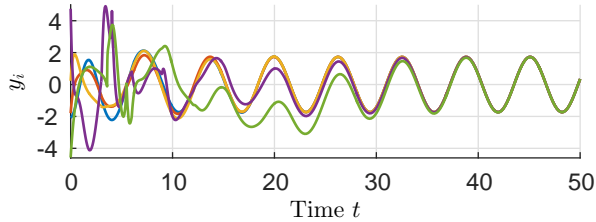}
		\caption{Simulation result (output synchronization problem with multiple roots)}
		\label{fig:result_synch2}
	\end{figure}

	\section{CONCLUSIONS}
\label{sec:conclusion}
	We have studied a multi-agent output regulation problem, where the (linear) agents are heterogeneous, subject to parameter uncertainty, and the network is time-varying. The challenge is that the
	exosystem's dynamics is not accessible by all agents, and consequently the agents do not initially possess a precise internal model of the exosystem. We have solved the problem by proposing
	a distributed controller consisting of two parts -- an exosystem generator that ``learns'' the dynamics of the exosystem and a dynamic compensator that ``learns'' the internal model.
	Moreover, we have extended this controller to solve a related problem, output synchronization, in which there is no exosystem.
	The effectiveness of our solution suggests a {\it distributed internal model principle}: converging internal models imply network output regulation/synchronization.

	\bibliographystyle{IEEEtran}
	\bibliography{mybib}

	\section*{APPENDIX}
	{\noindent\hspace{2em}{\itshape Proof of Lemma~\ref{lem:converge}: }}
		Since the origin is a uniformly exponentially stable equilibrium of $\dot{x} = A_1(t) x$, there exist bounded and positive definite matrices
		$P_1(t), Q_1(t)$ (for all $t \geq 0$) such that 
		\begin{align*}
			\dot{P}_1(t) + P_1(t) A_1(t) + A_1(t)^\top P_1(t) = -Q_1(t).
		\end{align*}
		Then $V_1(x,t) := x^\top P_1(t) x$ is a quadratic Lyapunov function for $\dot{x} = A_1(t)x$, and there exist constants $c_1,c_2,c_3,c_4$
		such that the following are satisfied (globally):
		\begin{align*}
			& c_1 ||x||^2 \leq V_1(x,t) \leq c_2 ||x||^2 \\
			& \frac{\partial V_1}{\partial t} + \frac{\partial V_1}{\partial x} A_1(t) x \leq -c_3 ||x||^2 \\
			& ||\frac{\partial V_1}{\partial x}|| \leq c_4 ||x||.
		\end{align*}
		Now consider $\dot{x} = A_1(t)x + A_2(t)x$. The term $A_2(t)x$ satisfies the inequality
		\begin{align*}
			|| A_2(t)x || \leq || A_2(t) || \cdot ||x||.
		\end{align*}
		Since $A_2(t) \rightarrow 0$, we have $|| A_2(t) || \rightarrow 0$.
		Hence viewing $A_2(t)x$ as a vanishing perturbation to $\dot{x} = A_1(t)x$, it follows from \cite[Corollary~9.1 and Lemma~9.5]{Kha:02} that
		the origin is also a uniformly exponentially stable equilibrium of $\dot{x} = A_1(t)x + A_2(t)x$. In turn, there exist bounded and positive
		definite matrices $P_2(t), Q_2(t)$ (for all $t \geq 0$) such that \begin{align*}
		\dot{P}_2(t) &+ P_2(t) (A_1(t)+A_2(t)) \\
			&+ (A_1(t)+A_2(t))^\top P_2(t) = -Q_2(t).
		\end{align*}
		Let $V_2(x,t) := x^\top P_2(t) x$ be a candidate Lyapunov function for (\ref{eq:A1A2A3}). Then 
		\begin{align*}
			\frac{\partial V_2}{\partial t} &+ \frac{\partial V_2}{\partial x} ( A_1(t)x + A_2(t)x + A_3(t) ) \\
			&= -x^\top Q_2(t) x + 2 x^\top P_2(t) A_3(t) x\\
			&\leq -(||Q_2(t)||-\frac{1}{\epsilon})||x||^2 + \epsilon ||P_2(t)A_3(t)||^2 \\
			&\leq -(||Q_2(t)||-\frac{1}{\epsilon})||x||^2 + \epsilon ||P_2(t)||^2 ||A_3(t)||^2.
		\end{align*}
		Let $\epsilon$ be such that $\epsilon >0$ and $||Q_2(t)||-\frac{1}{\epsilon} >0$. Then it follows from \cite[Theorem~5]{Son:07} that (\ref{eq:A1A2A3})
		is input-to-state stable, with $A_3(t)$ the input. Since $A_3(t) \rightarrow 0$ (uniformly exponentially), as a consequence of input-to-state stability
		(\cite[Section~3.1]{Son:07}, \cite[Section~4.9]{Kha:02}) we conclude that $x(t) \rightarrow 0$ (uniformly exponentially) as $t \rightarrow \infty$.
	{\hspace*{\fill}~\QED\par\endtrivlist\unskip}
	\bigskip	
	{\noindent\hspace{2em}{\itshape Proof of Lemma~\ref{lem:Swconverge}: }}
		From \cite[Theorem~1]{moreau2004stability} and (\ref{eq:Strans}),
		$S_i(i\in\hat{\mathcal{V}})$ reach consensus for all $S_i(0)$.		
		Since Assumption~\ref{ass:spanningtree} holds,
		the consensus value is $S_0$, i.e. $S_i(t)\rightarrow S_0$.
		
		To show $w_i\rightarrow w_0$, we consider
		\begin{align}
			\dot{w}_i=S_0w_i+\sum_{j=0}^N a_{ij}(t)(w_j-w_i)\label{eq:Provisional_cform}.
		\end{align}
		From the proof of Lemma 1 of \cite{scardovi2009synchronization} and by Assumption~\ref{ass:spanningtree}, $w_i\rightarrow w_0$ as $t\rightarrow\infty$.
		Using $\tilde{w}_i\coloneqq w_i-w_0$, we derive
		\begin{align}
			\dot{\tilde{w}}_i&=S_0w_i+\sum_{j=0}^N a_{ij}(t)(w_j-w_0-w_i+w_0)-S_0w_0\nonumber\\
			&=S_0\tilde{w}_i+\sum_{j=0}^N a_{ij}(t)(\tilde{w}_j-\tilde{w}_i).\label{eq:firstalgorithm}
		\end{align}
		Note that $\tilde{w}_0(t)=0$ for all $t\geq0$.
		We define $L^-(t)\in\mathbb{R}^{N\times N}$ to be the matrix obtained by removing the first row and the first column of graph Laplacian $L(t)$.
		In a compact form, (\ref{eq:firstalgorithm}) can be written as
		\begin{align}
			\dot{\tilde{w}}&=(I_N\otimes S_0-L^-(t)\otimes I_r)\tilde{w}\label{eq:matrixformfirst}
		\end{align}
		where $\tilde{w}\coloneqq[\tilde{w}_1^\top\cdots\tilde{w}_N^\top]^\top$.
		Since $\tilde{w}\rightarrow0$, the origin is a uniformly exponentially stable equilibrium of (\ref{eq:matrixformfirst}).
		
		Now we consider (\ref{eq:exoest}). Using $\tilde{S}_i\coloneqq S_i-S_0$,
		\begin{align*}
			\dot{\tilde{w}}_i&=S_i(t)w_i+\sum_{j=0}^N a_{ij}(t)(w_j-w_i)-S_0w_0\nonumber\\
			&=S_0\tilde{w}_i+\tilde{S}_iw_i+\sum_{j=0}^N a_{ij}(t)(\tilde{w}_j-\tilde{w}_i)\nonumber\\
			&=S_0\tilde{w}_i+\tilde{S}_i\tilde{w}_i+\tilde{S}_iw_0+\sum_{j=0}^N a_{ij}(t)(\tilde{w}_j-\tilde{w}_i)
		\end{align*}		
		and in a compact form,
		\begin{align*}
			\dot{\tilde{w}}=&(I_N\otimes S_0-L^-(t)\otimes I_r)\tilde{w}+{\rm diag}(\tilde{S}_1,\dots,\tilde{S}_N)\tilde{w}\nonumber\\
			&+{\rm diag}(\tilde{S}_1,\dots,\tilde{S}_N)({\bf 1}_N\otimes w0)
		\end{align*}
		where $\tilde{w}\coloneqq[\tilde{w}_1^\top\cdots\tilde{w}_N^\top]^\top$.
		Since $\tilde{S}_i\rightarrow0$, $\tilde{w}\rightarrow0$ from Lemma~\ref{lem:converge}.
		Therefore $w_i\rightarrow w_0$ as $t\rightarrow\infty$ for all $i\in\mathcal{V}$.
	{\hspace*{\fill}~\QED\par\endtrivlist\unskip}
	\bigskip	
	{\noindent\hspace{2em}{\itshape Proof of Lemma~\ref{lem:Swconverge_synch}: }}
		Without loss of generality, we reorder the index of agents as $\mathcal{V}_r=\{1,\dots,k\}$ and $\mathcal{V}\setminus\mathcal{V}_r=\{k+1,\dots,N\}$.
		Let $\bar{S}=[S_1(t)^\top\cdots S_N(t)^\top]^\top\in\mathbb{R}^{Nr\times r}$ be a bundled variable.
		In a compact form with respect to $S_i$, (\ref{eq:DEG_synch}) can be written as
		\begin{align*}
			&\dot{\bar{S}}=-(L_r(t)\otimes I_r)\bar{S},\\
			&\bar{S}(0)=[S^{*^\top}\cdots S^{*^\top}\ S_{k+1}(0)^\top\cdots S_N(0)^\top]^\top
		\end{align*}
		where 
		\begin{align*}
			L_r(t) = \left[\begin{array}{c|c}L_1(t) & 0 \\\hline L_2(t) & L_3(t)\end{array}\right].
		\end{align*}
		From \cite[Theorem~1]{moreau2004stability} and Assumption~\ref{ass:underlyingdigraph}, every $S_i (i \in \mathcal{V}_r)$ is such that $S_i(t) = S^*$ for all $t\geq0$, and every $S_i (i \in \mathcal{V} \setminus \mathcal{V}_r)$ is such that $S_i(t)\rightarrow S^*$ as $t\to\infty$. Therefore every $S_i (i \in \mathcal{V})$ reaches consensus for $S_i(0)=S^* (i \in \mathcal{V}_r)$ and arbitrary $S_i(0) (i \in \mathcal{V} \setminus \mathcal{V}_r)$, and the consensus value is $S^*$.
		
		To show $(w_i-w_j)\rightarrow 0$, we again consider (\ref{eq:Provisional_cform}) in the proof of Lemma~\ref{lem:Swconverge} above. From the proof of Lemma 1 of \cite{scardovi2009synchronization} and by Assumption~\ref{ass:underlyingdigraph}, $w_i\rightarrow w^*$ as $t\rightarrow\infty$ for all $i \in\mathcal{V}$. Here $w^*$ is a (virtual) signal generated by $\dot{w}^* = S^*w^*$ and $w^*(0)$ is related only to $w_i(0), i\in\mathcal{V}_r$.
		
		As with the proof of Lemma~\ref{lem:Swconverge} using $w^*$ instead of $w_0$, it is again derived that  $w_i\rightarrow w^*$ i.e. $(w_i-w_j)\rightarrow 0$ for all $i,j\in\mathcal{V}$.
	{\hspace*{\fill}~\QED\par\endtrivlist\unskip}

\addtolength{\textheight}{-12cm}   
\end{document}